\newtheorem{theorem}{Theorem}[section]
\newtheorem{lemma}[theorem]{Lemma}
\newtheorem{proof}[theorem]{proof}
\newcommand{\protocolwitness}{AC\textsuperscript{3}WN\xspace}
\newcommand{\protocolname}{AC\textsuperscript{3}\xspace}
\newcommand{\transactionname}{AC\textsuperscript{2}T\xspace}
\titlespacing\section{1pt}{2pt plus 2pt minus 2pt}{2pt plus 2pt minus 2pt}
\titlespacing\subsection{1pt}{2pt plus 4pt minus 2pt}{2pt plus 2pt minus 2pt}
\titlespacing\subsubsection{0pt}{2pt plus 4pt minus 2pt}{2pt plus 2pt minus 2pt}
\title{Atomic Commitment Across Blockchains}
\author{
  Victor Zakhary \\
  Department of Computer Science\\
  UC Santa Barbara\\
  Santa Barbara, CA 93106 \\
  \texttt{victorzakhary@ucsb.edu} \\
   \And
  Divyakant Agrawal \\
  Department of Computer Science\\
  UC Santa Barbara\\
  Santa Barbara, CA 93106 \\
  \texttt{divyagrawal@ucsb.edu} \\
   \And
  Amr El Abbadi \\
  Department of Computer Science\\
  UC Santa Barbara\\
  Santa Barbara, CA 93106 \\
  \texttt{elabbadi@ucsb.edu} \\
}
\begin{document}
\maketitle

\begin{abstract}
The recent adoption of blockchain technologies and open 
permissionless networks suggest the importance of 
peer-to-peer atomic cross-chain transaction protocols. 
Users should be able to atomically exchange tokens and 
assets without depending on centralized 
intermediaries such as exchanges. Recent peer-to-peer atomic 
cross-chain swap protocols use hashlocks and timelocks
to ensure that participants comply to the protocol. However,
an expired timelock could lead to a violation of the all-or-nothing
atomicity property. An honest participant who fails to execute a smart 
contract on time due to a crash failure or network delays at her site 
might end up losing her assets. Although a crashed participant
is the only participant who ends up worse off, current proposals are
unsuitable for atomic cross-chain transactions in asynchronous 
environments where crash failures and network delays are the norm. 
In this paper, we present \protocolwitness, the 
first decentralized all-or-nothing \textit{atomic} cross-chain commitment 
protocol. The redeem and refund events of the smart contracts that exchange assets are 
modeled as conflicting events. An open permissionless network of witnesses is 
used to guarantee that conflicting events could never simultaneously occur 
and either all smart contracts in an atomic cross-chain transaction
are redeemed or all of them are refunded.
\end{abstract}

\keywords{Atomic Commitment, Blockchains}

\section{Introduction} \label{sec:introduction}

The wide adoption of permissionless open blockchain networks by both industry (e.g., 
Bitcoin~\cite{nakamoto2008bitcoin}, Ethereum~\cite{wood2014ethereum}, etc) 
and academia (e.g., Bzycoin~\cite{kogias2016enhancing}, Elastico~\cite{luu2016secure}, 
BitcoinNG~\cite{eyal2016bitcoin}, Algorand~\cite{micali2016algorand}, etc) suggest the 
importance of developing protocols and infrastructures that support peer-to-peer atomic
cross-chain transactions. Users, who usually do not trust each other,
should be able to directly exchange their tokens 
and assets that are stored on different blockchains (e.g., Bitcoin and Ethereum)
without depending on trusted third party intermidiaries.
Decentralized permissionless~\cite{maiyya2018database} blockchain
ecosystems require infrastructure enablers and protocols that allow users to 
atomically exchange tokens without giving up trust-free decentralization,
the main reasons behind using  permissionless blockchain. We motivate the problem
of atomic cross-chain transactions and discuss
the current available solutions and their limitations through the following example.

Suppose Alice owns X bitcoins and she wants to exchange them for Y 
ethers. Luckily, Bob owns ether and he is willing to exchange his Y ethers for X bitcoins. 
In this example, Alice and Bob want 
to atomically exchange assets that reside in different blockchains. In addition,
both Alice and Bob \textbf{do not trust} each other and in many scenarios, 
they might not be co-located to do this atomic exchange in person.
Current infrastructures
do not support these direct peer-to-peer transactions. 
Instead, both Alice and Bob need to \textbf{independently} exchange their 
tokens through a trusted centralized exchange, Trent 
(e.g., Coinbase~\cite{coinbase} and Robinhood~\cite{robinhood}) either through
fiat currency or directly.  Using Fiat, both Alice
and Bob first exchange their tokens with Trent for a fiat currency (e.g., USD) and 
then use the earned fiat currency to buy the other token also from Trent or from 
another trusted exchange. Alternatively, some exchanges (e.g., Coinbase) allow their 
customers to directly exchange tokens (ether for bitcoin or bitcoin for ether) 
without going through fiat currencies.



These solutions have many drawbacks that make them unacceptable
solutions for atomic peer-to-peer cross-chain transactions. \textit{First}, 
both solutions require both Alice and Bob to trust Trent. This 
centralized trust requirement risks to derail the whole idea of 
blockchain's trust-free decentralization~\cite{nakamoto2008bitcoin}. 
\textit{Second}, both solutions require Trent to trade in all 
involved resources (e.g., bitcoin and ether). This requirement is 
unrealistic especially if Alice and Bob want to exchange commodity 
resources (e.g., transfer a car ownership for bitcoin assuming car titles
are stored in a blockchain~\cite{herlihy2018atomic, zakhary2019towards}). \textit{Third}, 
both solutions do not achieve atomicity of the transaction among the involved 
participants. Alice might trade her bitcoin directly for 
ether or through a fiat currency while Bob has no obligation to 
execute his part of the swap. Finally, both solutions significantly
increase the number of required transactions to achieve the intended cross-chain
transaction, and hence drastically increases the imposed fees. One cross-chain 
transaction between Alice and Bob results in either four transactions (two between Alice
and Trent and two between Bob and Trent) if fiat is used or at best
two transactions (one between Alice and Trent and one between Bob and Trent)
if assets are directly swapped.

An \textbf{A}tomic \textbf{C}ross-\textbf{C}hain \textbf{T}ransaction\footnote{Atomic Cross-Chain Transaction, \transactionname, and Atomic Swap, AS, are interchangeably used.}, \transactionname,
is a distributed transaction that 
spans multiple blockchains. This distributed transaction consists of
sub-transactions and each sub-transaction is executed on some blockchain.
An \textbf{A}tomic \textbf{C}ross-\textbf{C}hain \textbf{C}ommitment, \protocolname,
protocol is required to execute AC$^2$Ts. This protocol is a variation of traditional
distributed atomic commitment protocols (e.g., 2PC~\cite{gray1978notes, bernstein1987concurrency}). This protocol 
should guarantee both \textit{atomicity} and \textit{commitment}
of AC$^2$Ts. 
\textbf{Atomicity} ensures the 
\textbf{all-or-nothing} property where either all sub-transactions
take place or none of them is executed. \textbf{Commitment} guarantees 
that any changes caused by a cross-chain transaction must eventually take
place if the transaction is decided to commit. Unlike in 2PC and other traditional distributed 
atomic commitment protocols, atomic cross-chain commitment protocols 
are also trust-free and therefore 
must {\bf tolerate}  maliciousness~\cite{herlihy2018atomic}. 
A two-party atomic cross-chain swap protocol was originally proposed by 
Nolan~\cite{atomicNolan, atomicTrading} and generalized by 
Herlihy~\cite{herlihy2018atomic} to process multi-party atomic cross-chain swaps.
Both Nolan's protocol and its generalization by Herlihy use smart contracts, hashlocks, 
and timelocks to execute atomic cross-chain swaps. A smart contract is a self executing
contract (or a program) that gets executed in a blockchain 
once all the terms of the contract are satisfied. A hashlock is 
a cryptographic one-way hash function $h = H(s)$ that locks
assets in a smart contract until a hash secret $s$ is provided. A timelock is a time 
bounded lock that triggers the execution of a smart contract function
after a pre-specified time period. 

The atomic swap between Alice and Bob, explained in the earlier example,
is executed using Nolan's protocol as follows. Let a participant
be the leader of the swap, say Alice. Alice creates a 
secret $s$, only known to Alice, and a hashlock $h = H(s)$. 
Alice uses $h$ to lock X bitcoins
 in a smart contract $SC_1$ and publishes $SC_1$ in the Bitcoin network. 
$SC_1$ states to transfer X bitcoins to Bob if Bob provides the 
secret $s$ such that $h = H(s)$ to $SC_1$. 
In addition, $SC_1$ is locked with a timelock $t_1$ that refunds the X bitcoins 
to Alice if Bob fails to provide $s$ to $SC_1$ before $t_1$ expires. As $SC_1$ is published
in the Bitcoin network and made public to everyone, Bob can verify that $SC_1$ indeed
transfers X bitcoins to the public address of Bob if Bob provides $s$ to $SC_1$. 
In addition, Bob learns $h$ from $SC_1$. Using $h$, Bob publishes a smart contract 
$SC_2$ in the Ethereum network that locks Y ethers in $SC_2$ using $h$. $SC_2$ 
states to transfer Y ethers to Alice if Alice provides the secret $s$ to $SC_2$. 
In addition, $SC_2$ is locked with a timelock $t_2 < t_1$ that refunds the Y ethers 
to Bob if Alice fails to provide $s$ to $SC_2$ before $t_2$ expires. 

Now, if Alice wants to redeem her Y ethers from $SC_2$, Alice must
reveal $s$ to $SC_2$ before $t_2$ expires. Once $s$ is provided to
$SC_2$, Alice redeems the Y ethers and $s$ gets revealed to Bob.
Now, Bob can use $s$ to redeem his X bitcoins from $SC_1$ before
$t_1$ expires. Notice that $t_1 > t_2$ is a necessary condition to ensure that Bob has enough time to redeem his X bitcoins from $SC_1$
after Alice provides $s$ to $SC_2$ and before $t_1$ expires. 
If Bob provides $s$ to $SC_1$ before $t_1$ expires, Bob successfully 
redeems his X bitcoins and the atomic swap is marked completed.

\textbf{The case against the current proposals:} if Bob fails to 
provide $s$ to $SC_1$ before $t_1$ expires due to a crash failure or 
a network partitioning at Bob's site, Bob loses his X bitcoins and 
$SC_1$ refunds the X bitcoins to Alice. This violation of the atomicity 
property of the protocol penalizes Bob for a failure that happens out 
of his control. Although a crashed participant is the only participant 
who ends up being worse off (Bob in this example), this protocol does 
not guarantee the atomicity of AC$^2$Ts in asynchronous environments 
where crash failures, network partitioning, and message delays are the 
norm.

\begin{sloppypar}
Another important drawback in Nolan's and Herlihy's protocols
is the requirement of sequentially publishing the smart contracts in an atomic swap before the leader (Alice in our 
example) reveals the secret $s$. This requirement is necessary to ensure that
the publishing events of all the smart contracts in the atomic
swap \textit{happen before} the redemption of any of the smart contracts. 
This causality requirement ensures that any malicious participant
who declines to publish a smart contract does not take advantage of the protocol.
However, the sequential publishing of smart contracts, especially in atomic swaps
that include many participants, proportionally increases the latency of the swap 
to the number of sequentially published contracts.
\end{sloppypar}

In this paper, we propose \textbf{\protocolwitness}, the first 
decentralized all-or-nothing 
\textbf{A}tomic \textbf{C}ross-\textbf{C}hain \textbf{C}ommitment protocol that uses an open \textbf{W}itness \textbf{N}etwork. The redemption and the refund events of smart contracts in \transactionname are modeled as conflicting events. 
A decentralized open network of witnesses is used to guarantee that conflicting events must
never simultaneously take place and 
either all smart contracts in an \transactionname are redeemed or all of them are refunded. Unlike in Nolan's and Herlihy's protocols, 
\protocolwitness allows all participants to concurrently
publish their contracts in a swap resulting in a drastic
decrease in an atomic swap's latency. Our contribution is 
summarized as follows:
\begin{itemize}

    \item We present \protocolwitness, the first all-or-nothing atomic
    cross-chain commitment protocol. \protocolwitness is decentralized and does not
    require to trust any centralized intermediary.
    \item We prove the correctness of \protocolwitness showing that
    \protocolwitness achieves both atomicity and commitment of AC$^2$Ts.
    
    \item Finally, we analytically evaluate \protocolwitness in comparison
    to Herlihy's~\cite{herlihy2018atomic} protocol. Unlike in Herlihy's protocol
    where the latency of an atomic swap proportionally increases as the number
    of the sequentially published smart contracts in the atomic swap increases, 
    our analysis shows that the latency of an atomic swap in 
    \protocolwitness is constant irrespective of the number of smart contracts involved.
\end{itemize}

The rest of the paper is organized as follows. In Section~\ref{sec:background}, we 
discuss the open blockchain data and transactional models. 
Section~\ref{sec:model} explains the cross-chain distributed
transaction model and Section~\ref{sec:solution}
presents our atomic cross-chain commitment protocol. An analysis of the atomic
cross-chain commitment protocol is
presented in Section~\ref{sec:generalizations}. The protocol is evaluated in 
Section~\ref{sec:evaluation} and the paper is concluded in Section~\ref{sec:conclusion}.

\section{Open Blockchain Models} \label{sec:background}

\subsection{Architecture Overview}
An open permissionless blockchain system~\cite{maiyya2018database} 
(e.g., Bitcoin, Ethereum) typically consists of two layers: a storage layer
and an application layer as illustrated in Figure~\ref{fig:architecture}. 
\textbf{The storage layer} comprises a decentralized 
distributed ledger managed by an open network of computing 
nodes. A blockchain system is permissionless if computing nodes can join or 
leave the network of its storage layer at any moment
without obtaining a permission from a centralized authority. Each computing 
node, also called a miner, maintains a copy of the ledger. The ledger is a tamper-proof 
chain of blocks, hence named blockchain. Each block contains a set of 
valid transactions that transfer assets among end-users. \textbf{The application layer}
comprises end-users who communicate with the storage layer via \textit{message 
passing} through a client library. End-users have identities, defined by their public keys, and signatures, generated using their private keys. 
Digital signatures are the end-users'
way to generate transactions as explained later in 
Section~\ref{sub:transaction_model}. End-users submit their transactions
to the storage layer through a client library. Transactions are used to 
transfer assets from one identity to another. End-users multicast their 
transaction messages to mining nodes in the storage layer.
\begin{figure}[ht!]
	\centering
    \includegraphics[width=0.5\columnwidth]{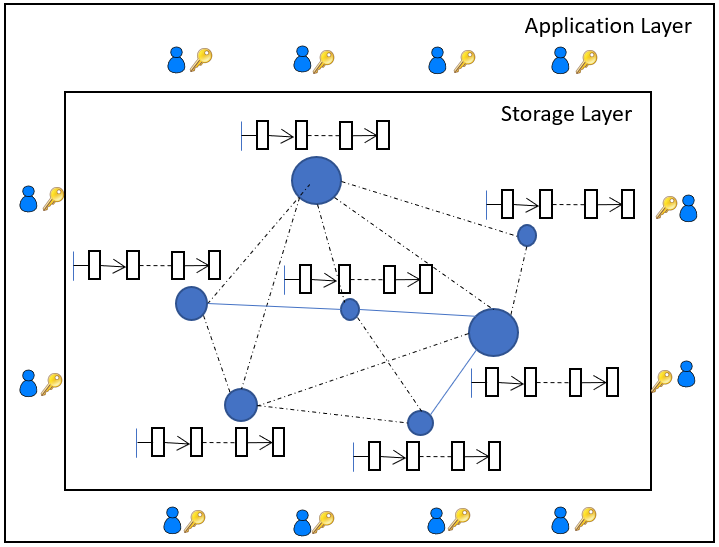}
    \caption{Open blockchain architecture overview.}
    \label{fig:architecture}
\end{figure}

A mining node validates the transactions it receives and valid transactions
are added to the current block of a mining node. Miners run a consensus
protocol through mining to agree on the next block to be added to the chain.
A miner who mines a block gets the right to add its block to the chain and 
multicasts it to other miners. To make progress, miners accept the first 
received mined block after verifying it and start mining the 
next block\footnote{Forks and fork resolutions are discussed in later Sections.}. 
Sections~\ref{sub:data_model} and~\ref{sub:transaction_model} explain the data model
and the transactional model of open blockchain systems respectively.

\subsection{Data Model} \label{sub:data_model}
The storage layer stores the ownership information of assets
in the system in the blockchain.  The ownership is determined through 
identities and identities are typically implemented using public keys.
For example, the Bitcoin blockchain stores the 
information of the most recent owner of every bitcoin in the 
Bitcoin blockchain.  A bitcoin 
that is linked to Alice's public key is owned by Alice. In addition,
the blockchain stores transactions that transfer the ownership of an
asset from one identity to another. Therefore, an asset can be tracked 
from its registration in the blockchain, the first owner, to its last owner 
in the blockchain. In the Bitcoin
network, new bitcoins are generated and registered in the Bitcoin blockchain
through mining. Asset ownership is transferred from one identity to 
another through a transaction. In addition, transactions are used to merge 
or split assets as explained in Section~\ref{sub:transaction_model}.

\subsection{Transaction Model} \label{sub:transaction_model}

\begin{figure}[ht!]
	\centering
    \includegraphics[width=\columnwidth]{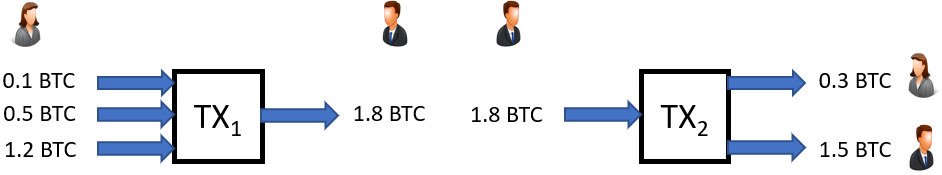}
    \caption{Asset transactional model.}
    \label{fig:transations}
\end{figure}
A transaction is a digital signature 
that transfers the ownership of assets from one identity
to another. End-users, in the application layer,
use their private keys~\cite{rivest1978method} to digitally sign 
assets linked to their identity to transfer these assets to other 
identities, identified by their public keys. These digital signatures
are submitted to the storage layer via message passing through a 
client library. It is the responsibility of the miners to validate 
that end-users can transact only on their own assets. If an 
end-user digitally signs an asset that is not owned by this end-user,
the resulting transaction is not valid and is rejected by the miners.
In addition, miners validate that an asset cannot be spent twice and hence
prevent double spending of assets.

A transaction takes one or more input assets owned by one identity 
and results in one or more output assets where each output asset
is owned by one identity. Therefore,
transactions are used to merge or split assets. Figure~\ref{fig:transations}
shows two transactions $TX_1$, a transaction that merges assets, 
and $TX_2$, a transaction that splits assets.  $TX_1$ takes 3 input assets
owned by Alice, merges them into one output asset, and transfers the 
ownership of this merged asset to Bob. On the other hand, $TX_2$ takes 
one input asset owned by Bob and splits it into 2 output assets of two 
different values; one is transferred to Alice and the other is transferred 
to Bob. Note that the summation of a transaction's input assets matches the 
summation of its output assets assuming that no transaction fees are 
imposed.

\begin{figure}[ht!]
	\centering
    \includegraphics[width=\columnwidth]{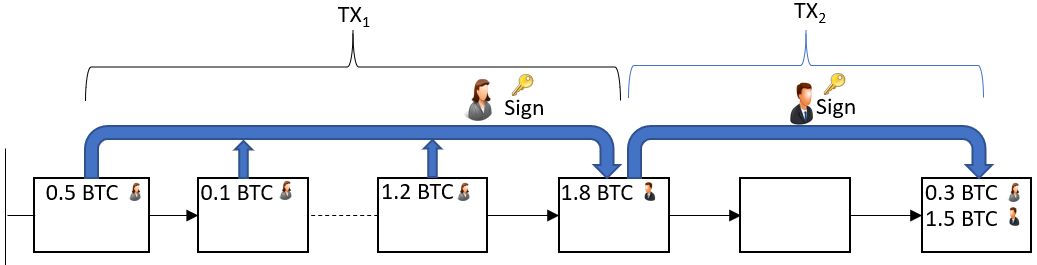}
    \caption{The blockchain representation of $TX_1$ and $TX_2$ of Figure~\ref{fig:transations}.}
    \label{fig:blockchain_transactions}
\end{figure}

Figure~\ref{fig:blockchain_transactions} shows an example of how $TX_1$
and $TX_2$ take place in the Bitcoin blockchain. As shown, Alice can
 only transact on assets that she owns in previous blocks in the blockchain
issuing $TX_1$. Similarly, once the ownership
of 1.8 bitcoin is transferred to Bob, only then can Bob issue the 
transaction
$TX_2$ to split the 1.8 bitcoin asset to 0.3 to Alice and 1.5 to Bob 
in a following block. In traditional databases, 
end-user transactions execute arbitrary updates in the storage
layer as long as the semantic and the access control rights 
of a transaction are validated in the 
application layer. On the other hand, 
in blockchain systems, this validation is 
explicitly enforced in the storage layer and hence end-users, in the 
application layer, are allowed to transact only on assets
they own in the storage layer.

Another way to perform transactions in blockchain systems is through \textbf{smart
contracts}. A smart contract is a program written in some scripting language
(e.g., Solidity for Ethereum smart contracts~\cite{solidity}) that
allows general program executions on a blockchain's mining nodes.
End-users deploy a smart contract in a blockchain through a deployment\footnote{Deployment and publishing are used interchangeably.} message, 
$msg$, that is sent to the mining nodes in the storage layer. 
The deployment message includes the smart contract code in addition to 
some implicit parameters that are accessible to the smart contract code once
the smart contract is deployed. These parameters include the sender 
end-user public key, accessed through $msg.sender$, and an optional asset 
value, accessed through $msg.val$. This optional asset value allows end-users to 
lock some of their assets in the deployed smart contract. Like transactions, a smart
contract is deployed in a blockchain if it is included in a mined block in 
this blockchain. We adopt Herlihy's notion 
of a smart contract as an object in programming 
languages~\cite{herlihy2019blockchains,dickerson2017adding}.
A smart contract has a state, a constructor that is called when 
a smart contract is first deployed in the blockchain, and a set
of functions that could alter the state of the smart contract. The 
constructor initializes the smart contract's state and uses
the implicit parameters to initialize the owner of the smart contract 
and the assets to be locked in this smart contract. 
Miners verify that the end-user who deploys 
a smart contract indeed owns these assets. Once assets are locked in a 
smart contract, their owners cannot transact on these assets outside the 
smart contract logic until these assets are unlocked from the smart contract 
as a result of a smart contract function call. To execute
a smart contract function, end-users submit their function call accompanied
by the function parameters through messages to miners. These messages could include 
implicit parameters as well (e.g., $msg.sender$). Miners 
execute\footnote{End-users pay to miners a smart contract deployment fee
plus a function invocation fee for every function call.} the function on the current
contract state and record any contract state changes in their current block in the 
blockchain. Therefore, a smart contract state might span many blocks after the block where the smart contract is first deployed.

\section{Atomic Cross-Chain Transaction Model}\label{sec:model}

\begin{figure}[ht!]
	\centering
    \includegraphics[width=0.4\columnwidth]{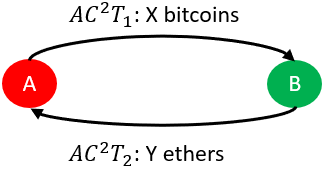}
    \caption{An atomic cross-chain transaction graph to swap X bitcoins for Y ethers
    between Alice (A) and Bob (B).}
    \label{fig:atomic-swap-graph}
\end{figure}

An Atomic Cross-Chain Transaction, \transactionname,  is a distributed transaction to 
transfer the ownership of assets stored in multiple blockchains among two or 
more participants. This distributed transaction consists of
sub-transactions and each sub-transaction transfers an asset on some blockchain.
An \transactionname is modeled using a directed graph $\mathcal{D} = 
(\mathcal{V},\mathcal{E})$~\cite{herlihy2018atomic} where
$\mathcal{V}$ is the set of vertexes and $\mathcal{E}$ is the set of 
edges in $\mathcal{D}$. $\mathcal{V}$ represents the participants in \transactionname
and $\mathcal{E}$ represents the sub-transactions in \transactionname. A directed edge $e = 
(u, v) \in \mathcal{E}$ represents a sub-transaction that transfers an asset $e.a$ from a source
participant $u \in \mathcal{V}$ to a recipient participant $v \in 
\mathcal{V}$ in some blockchain $e.BC$. Figure~\ref{fig:atomic-swap-graph} shows an example
of an \transactionname graph between Alice (A) and Bob (B). As shown,
the edge (A, B) represents the sub-transaction $AC^{2}T_1$ that
transfers X bitcoins from A to B while the edge (B, A) represents
the sub-transaction $AC^{2}T_2$ that transfers Y ethers from B to A.

An atomic cross-chain commitment protocol is required in order to 
correctly execute an \transactionname. This protocol must ensure the 
atomicity and the commitment of all sub-transactions in \transactionname as 
follows.

\begin{itemize}
    \item Atomicity: either all asset transfers of all sub-transactions in 
    the \transactionname take place or none of them does.
    \item Commitment: once the atomic cross-chain commitment protocol decides
    the commitment of an \transactionname, all asset 
    transfers of all sub-transactions in this \transactionname must eventually take place.
\end{itemize}

An atomic cross-chain commitment protocol is a variation of the two phase commit protocol (2PC)~\cite{gray1978notes, bernstein1987concurrency}. Therefore,
we use the analogy of 2PC  to explain an abstraction of an
atomic cross-chain commitment protocols. In 2PC, a distributed transaction
spans multiple data partitions and each partition is responsible for
executing a sub-transaction. A coordinator sends
a vote request to all involved data partitions. Upon receiving a vote
request, a data partition votes back \textit{yes} only if it succeeds 
in executing all the operations of its sub-transaction on the involved data objects. 
Otherwise, a data partition votes \textit{no} to the coordinator. A coordinator
decides to commit a distributed transaction if all involved data 
partitions vote yes, otherwise it decides to abort the distributed
transaction. If a commit decision is reached, all data partitions
commit their sub-transactions. However, if an 
abort decision is reached, data partitions abort their sub-transactions. 
2PC assumes that the coordinator and the data
partitions are trusted. The main challenge in blockchain systems is
how to design a trust-free variation of 2PC where end-user participants do 
not trust each other and a protocol cannot depend on a centralized trusted 
coordinator.

An atomic cross-chain commitment protocol requires that for 
every edge $e = (u, v) \in \mathcal{E}$, the source participant $u$ 
to lock an asset $e.a$ in Blockchain $e.BC$. This asset locking is 
necessary to temporarily prevent the participant $u$ from spending 
$e.a$ through other transactions in $e.BC$. If every source participant 
$u$ locks $e.a$ in $e.BC$, the atomic cross-chain commitment protocol can 
decide to commit the \transactionname. Once the protocol decides to 
commit the \transactionname, every recipient participant $v$ should be 
able to \textit{redeem} the asset $e.a$. However, if the protocol decides to 
abort the \transactionname because some participants do not comply to the 
protocol or a participant requests the transaction to abort, every source 
participant $u$ should be able to refund their locked assets $e.a$.

In blockchain systems, smart contracts are used to implement this logic. 
Participant $u$ deploys a smart contract $SC_e$ 
in Blockchain $e.BC$ to lock an asset $e.a$ owned by $u$ in $SC_e$.
$SC_e$ ascertains to conditionally transfer $e.a$ to $v$ if a commitment 
decision is reached, otherwise $e.a$ is refunded to $u$. A smart contract 
$SC_e$ exists in one of three states: \textit{published} ($P$), 
\textit{redeemed} ($RD$), or \textit{refunded} ($RF$). A smart 
contract $SC_e$ is \textit{published} if it gets deployed to $e.BC$ by $u$. 
Publishing the smart contract $SC_e$ serves \textit{two} important goals towards 
the atomic execution of an \transactionname. First, it represents a 
\textit{yes} vote on the sub-transaction corresponding to the 
edge $e$. Second, it locks the asset $e.a$ in blockchain $e.BC$.
A smart contract $SC_e$ is \textit{redeemed} if participant $v$ 
successfully redeems the asset $e.a$ from $SC_e$. Finally, a smart contract $SC_e$ is 
refunded if the asset $e.a$ is refunded to participant $u$.

Now, if for every edge $e = (u, v) \in \mathcal{E}$, participant
$u$ publishes smart contract $SC_e$ in $e.BC$,
it means that all participants vote yes on \transactionname,
lock their involved assets in \transactionname, and hence
the \transactionname can be committed. However, if some participants
decline to publish their smart contracts, the \transactionname has to be aborted.
The commitment of \transactionname requires the redemption of 
\textbf{every} smart contract $SC_e$ in \transactionname. On the other hand, if the 
\transactionname aborts, this requires the refund
of \textbf{every} smart contract $SC_e$ in \transactionname. 

To implement conditional smart contract redemption and refund, 
a cryptographic commitment scheme primitive based on~\cite{goldreich2001foundations}
is used. A \textit{commitment scheme} allows a user to commit to some 
chosen value without revealing this value. Once this hidden value is 
revealed, other users can verify that the revealed value is indeed the 
one that is used in the commitment. A \textit{hashlock} is an example of a 
commitment scheme. A hashlock is a cryptographic one-way hash function $h 
= H(s)$ that is used to conditionally lock assets in a smart contract 
using $h$, the lock, until a hash secret $s$, the key, is revealed. 
Once $s$ is revealed, everyone can verify that the lock $h$ equals 
to $H(s)$ and hence unlocks the assets 
locked in the smart contract. 

An atomic cross-chain commitment protocol should ensure that smart
contracts in \transactionname are \textit{either all redeemed or all refunded}.
For this, a protocol uses \textit{two} mutually exclusive commitment 
scheme instances: a redemption commitment scheme and a refund commitment scheme.
All smart contracts in \transactionname commit their redemption action 
to the redemption commitment scheme instance and their refund action to the refund 
commitment scheme instance. If the protocol decides to commit the \transactionname, 
the protocol must publish the redemption commitment scheme secret. This 
allows all participants in \transactionname to redeem their assets. However, if  
the protocol reaches an abort decision, the protocol must publish
the refund commitment scheme secret. This allows participants 
in \transactionname to refund the locked assets in every published 
smart contract. A protocol must ensure that once the
secret of one commitment scheme instance is revealed, the secret of the other 
instance cannot be revealed. This guarantees the \textit{atomic} execution
of an \transactionname. In Section~\ref{sec:solution}, we instantiate different
protocols that implement \textit{mutually exclusive redemption and refund}
commitment schemes in different ways.

Algorithm~\ref{algo:smart_contract_template} illustrates a smart contract
template that 
can be used in implementing an atomic cross-chain commitment 
protocol. Each smart contract has a sender $s$ and recipient $r$ 
(Line~\ref{line:addresses}), an asset $a$ (Line~\ref{line:asset}) to 
be transferred from $s$ to $r$ through the contract, 
a state (Line~\ref{line:state}), and a redemption and refund commitment scheme
instances $rd$ and $rf$ (Lines~\ref{line:cs_rd} and~\ref{line:cs_rf}). A smart contract
is published in a blockchain through a deployment message. When published, 
its constructor (Line~\ref{line:t-constructor}) is executed to initialize the contract. 
The deployment message of a smart contract typically includes some 
implicit parameters like the sender's address (msg.sender, 
Line~\ref{line:implicit_sender}) and the 
asset value (msg.value, Line~\ref{line:implicit_value}) 
to be locked in the contract. The constructor initializes
the addresses, the asset value, the refund and redemption commitment
schemes, and sets the contract state to P 
(Lines~\ref{line:sender-recceiver}--\ref{line:published}).

\begin{algorithm}[!h]
\caption{An atomic swap smart contract template.}
 \label{algo:smart_contract_template}

abstract class AtomicSwapSC \{
\begin{algorithmic}[1]
    \State enum State \{Published (P), Redeemed (RD), Refunded (RF)\} \label{line:states}
    \State Address s, r // Sender and recipient public keys.
    \label{line:addresses}
    \State Asset a
    \label{line:asset}
    \State State state
    \label{line:state}
    \State CS rd // Redemption commitment scheme
    \label{line:cs_rd}
    \State CS rf // Refund commitment scheme
    \label{line:cs_rf}
    
    \Procedure{Constructor}{Address r, CS rd, CS rf} \label{line:t-constructor}
        \State this.s = msg.sender, this.r = r         \label{line:sender-recceiver}
        \label{line:implicit_sender}
        \State this.a = msg.value
        \label{line:implicit_value}
        \State this.rd = rd, this.rf = rf
        \State state = P
        \label{line:published}
    \EndProcedure
    
    \Procedure{Redeem}{Secret $s_{rd}$} \label{line:t-redeem}
        \State requires(state == P and IsRedeemable($s_{rd}$))
        \label{line:redeem_requirement}
        \State transfer a to r        \label{line:asset-transfer-1}

        \State state = RD        \label{line:state-change-1}

    \EndProcedure
    
    \Procedure{Refund}{Secret $s_{rf}$} \label{line:t-refund}
        \State requires(state == P and IsRefundable(Secret $s_{rf}$))
        \label{line:refund_requirement}
        \State transfer a to s\label{line:asset-transfer-2}
        \State state = RF\label{line:state-change-2}
    \EndProcedure
    
    \Procedure{IsRedeemable}{Secret $s_{rd}$} \label{line:t-isredeemable}
        \State return verify(rd, $s_{rd}$)
    \EndProcedure
    
    \Procedure{IsRefundable}{Secret $s_{rf}$} \label{line:t-isrefundable}
        \State return verify(rf, $s_{rf}$)
    \EndProcedure
\end{algorithmic}
\}
\end{algorithm}

In addition, each smart contract has a redeem function (Line~\ref{line:t-redeem}) and a 
refund function (Line~\ref{line:t-refund}). A redeem function requires the smart 
contract to be in state \textit{P} and that the provided commitment scheme secret is 
valid (Line~\ref{line:redeem_requirement}). If all these requirements hold, the asset $a$ is 
transferred from the contract to the recipient and the contract state is changed to 
\textit{RD} (Lines~\ref{line:asset-transfer-1}--\ref{line:state-change-1}). However, if any requirement is violated, the redeem function fails and 
the smart contract state is not changed.

Similarly, the refund function requires the smart contract to be in \textit{P}
state and that the provided commitment scheme secret is valid (Line~\ref{line:refund_requirement}).
If all these requirements hold, the asset $a$ is refunded from the contract to the sender
and the contract state is changed to \textit{RF} (Lines~\ref{line:asset-transfer-2}--\ref{line:state-change-2}).

The redeem and the refund functions use \textit{two} helper functions: IsRedeemable
(Line~\ref{line:t-isredeemable}) and IsRefundable (Line~\ref{line:t-isrefundable}).
IsRedeemable verifies that the provided redemption commitment scheme secret
is valid and hence the smart contract can be redeemed. Similarly, IsRefundable
verifies that the provided refund commitment scheme secret
is valid and hence the smart contract can be refunded. In 
Section~\ref{sec:solution}, we instantiate different versions of these
two functions for every atomic cross-chain commitment protocol.

\section{\protocolname: Atomic Cross-Chain Commitment} \label{sec:solution}

This section presents two \textbf{A}tomic \textbf{C}ross-\textbf{C}hain \textbf{C}ommitment,
\protocolname, protocols that achieve both \textbf{atomicity} and \textbf{commitment}
of an \transactionname. There are \textit{two} main challenges in designing a correct 
\protocolname protocol. The first challenge is how to implement the redemption and 
refund commitment scheme instances used by every smart contract in \transactionname. 
The second challenge is how to ensure that the two instances are \textit{mutually exclusive}. 
If the secret of one instance is revealed, the secret of the other instance
\textit{must never} be revealed. 
First, we present AC$^3$TW,
an \protocolname protocol that uses a centralized \textbf{T}rusted  
\textbf{W}itness in Section~\ref{sub:centralized-witness}. Then, we present
\protocolwitness, an \protocolname protocol that replaces the 
centralized trusted witness with a permissionless \textbf{W}itness \textbf{N}etwork 
in Section~\ref{sub:permissionless-witness}. Using
a permissionless network of witnesses does not require more trust in the 
witness network than the required trust in the blockchains used to exchange 
the assets in an \transactionname.  Furthermore, the \protocolwitness protocol 
overcomes the vulnerability of the centralized trusted witness, which may fail or be 
subject to denial of service attacks. 

\subsection{AC$^3$TW: Centralized Trusted Witness} \label{sub:centralized-witness}

A centralized trusted witness, Trent, is leveraged to implement an 
\protocolname protocol as follows. For every \transactionname, a directed 
graph $\mathcal{D} = (\mathcal{V},\mathcal{E})$ is constructed at some timestamp $t$ and 
multisigned by all the participants in the set $\mathcal{V}$ generating a graph multisignature
$ms(\mathcal{D})$ as shown in Equation~\ref{equation:1}.
The timestamp $t$ is important to distinguish between identical $AC^{2}T$s among the 
same participants. The order of participant  signatures in $ms(\mathcal{D})$ is 
not important. Any signature order indicates that all participants
in the \transactionname agree on the graph $\mathcal{D}$ at some timestamp $t$.

\begin{equation}
\centering
ms(\mathcal{D}) = sig(...,sig((\mathcal{D}, t), p_1),...,p_{|\mathcal{V}|})
\label{equation:1}
\end{equation}

Afterwards, any participant in the set $\mathcal{V}$ registers $ms(\mathcal{D})$ at Trent
through a registration message. This indicates that participants in the \transactionname 
trust Trent to witness their \transactionname. Trent's identity is leveraged to implement 
the redemption and refund commitment scheme instances and Trent's digital 
signatures~\cite{rivest1978method} are leveraged to implement their corresponding 
commitment scheme secrets.
After $ms(\mathcal{D})$ is registered at Trent,
participants in the \transactionname publish smart contracts in the \transactionname
in their corresponding blockchains. Participants set both the redemption
and refund commitment scheme instances of their smart contracts to the pair 
$(ms(\mathcal{D}), PK_T)$ where $PK_T$ is Trent's public key. Trent's signatures 
$T(ms(\mathcal{D}), RD)$ and $T(ms(\mathcal{D}), RF)$
are used to implement the redemption and refund commitment scheme secrets respectively.

\textbf{The witness role:} Trent maintains a key/value store of 
$ms(\mathcal{D})$'s as the key, and his digital signature to 
either $(ms(\mathcal{D}),RD)$ or $(ms(\mathcal{D}),RF)$ as the value. 
When Trent receives a multisigned graph's 
registration message $ms(\mathcal{D})$, Trent checks that $ms(\mathcal{D})$
has not been registered before. If true, Trent inserts $ms(\mathcal{D})$ to the
key/value store. Trent sets the key to $ms(\mathcal{D})$ and sets its 
corresponding value to $\bot$.
Now, if all participants deploy
their smart contracts in their corresponding blockchains, any participant
can request a redemption signature from Trent for $ms(\mathcal{D})$
through a redemption request message. On receiving a redemption request,
Trent verifies that $ms(\mathcal{D})$ is registered in the key/value store.
If the value corresponding to $ms(\mathcal{D})$ is $\bot$, Trent verifies
that all smart contracts in the \transactionname are deployed and that the redemption
and refund commitment scheme instances of every smart contract are set to 
$(ms(\mathcal{D}), PK_T)$. If true, Trent \textit{witnesses} the redemption of
the \transactionname by signing $(ms,RD)$ and setting the value 
of the key $ms(\mathcal{D})$ to $T(ms(\mathcal{D}),RD)$. However, if the verification
fails, Trent keeps the value of the key $ms(\mathcal{D})$ unchanged.
Similarly, a participant can request a refund signature from Trent 
for $ms(\mathcal{D})$ through a refund request message. On receiving a refund
request, Trent verifies that $ms(\mathcal{D})$ is registered in the key/value store
and its corresponding value is $\bot$. If true, Trent \textit{witnesses} the refund
of the \transactionname by signing $(ms,RF)$ and setting the value of the key 
$ms(\mathcal{D})$ to $T(ms(\mathcal{D}),RF)$. However, 
if the verification fails, Trent keeps the value of the key 
$ms(\mathcal{D})$ unchanged. Trent responds to redemption and refund
requests of $ms(\mathcal{D})$ with the value corresponding to $ms(\mathcal{D})$
in the key/value store. Trent uses the key/value store to ensure that 
either $T(ms(\mathcal{D}), RD)$ or $T(ms(\mathcal{D}), RF)$ can be issued for an \transactionname. 
Once a signature of one commitment scheme secret is revealed (e.g., the redemption signature), 
the signature of the other commitment scheme secret cannot be issued (e.g., the refund 
signature). 
This guarantees that the redemption and refund commitment scheme secrets
are \textbf{mutually exclusive}. 
Trent's signature $T(ms(\mathcal{D}),RD)$ implies that Trent \textit{witnessed} 
the deployment
of all smart contracts in the \transactionname and hence the \transactionname
can be committed without violating atomicity. The \transactionname is committed
once Trent issues $T(ms(\mathcal{D}),RD)$. Afterwards, participants use $T(ms(\mathcal{D}),RD)$ to eventually
redeem all the locked assets in the published smart contracts. Trent's signature 
$T(ms(\mathcal{D}),RF)$ implies that the \transactionname was not previously committed and 
hence can be aborted. The \transactionname is aborted
once Trent issues $T(ms(\mathcal{D}),RF)$. Afterwards, participants use $T(ms(\mathcal{D}),RF)$ to eventually
refund all the locked assets in the published smart contracts.

\begin{algorithm}[!h]
\caption{Smart contract for centralized \protocolname}
 \label{algo:centralize_smart_contract}

class CentralizedSC extends AtomicSwapSC \{

\begin{algorithmic}[1]
    \Procedure{Constructor}{Address r, MS ms, PK $PK_T$} \label{line:constructor}
        \State this.rd = this.rf = ($ms(\mathcal{D})$, $PK_T$) \label{line:rd_cs}
        \State super(r, this.rd, this.rf) // parent constructor
    \EndProcedure
    
    \Procedure{isRedeemable}{Signature $s_{rd}$} \label{line:isredeemable}
        \State return SigVerify((rd.$ms(\mathcal{D}$), RD), rd.$PK_T$, $s_{rd}$) \label{line:verify-s-rd}
    \EndProcedure
    
    \Procedure{isRefundable}{Signature $s_{rf}$} \label{line:isrefundable}
        \State return SigVerify((rf.$ms(\mathcal{D}$), RF), rf.$PK_T$, $s_{rf}$)
        \label{line:verify-s-rf}
    \EndProcedure
    
\end{algorithmic}
\}
\end{algorithm}

Algorithm~\ref{algo:centralize_smart_contract} presents a smart contract
class inherited from the smart contract template in 
Algorithm~\ref{algo:smart_contract_template} that uses
Trent's digital signatures as redemption and refund commitment 
scheme secrets. Both the redemption and the refund commitment scheme 
instances comprise the ordered pair ($ms(\mathcal{D})$, $PK_T$) (Line~\ref{line:rd_cs}). 
The IsRedeemable function (Line~\ref{line:isredeemable}) takes a 
digital signature as input and verifies that this signature is 
Trent's signature to $(ms(\mathcal{D}), RD)$ using a SigVerify function (Line~\ref{line:verify-s-rd}). 
Similarly, The IsRefundable function (Line~\ref{line:isrefundable}) 
takes a digital signature as input and verifies that this signature is 
Trent's signature to $(ms(\mathcal{D}), RF)$ 
(Line~\ref{line:verify-s-rf}).

The following steps summarizes the AC$^3$TW protocol steps to execute the 
\transactionname shown in Figure~\ref{fig:atomic-swap-graph}:
\begin{enumerate}
    \item Alice and Bob construct the graph $\mathcal{D}$ and multisign
    $(\mathcal{D},t)$ to generate $ms(\mathcal{D})$.
    \item Either Alice or Bob registers $ms(\mathcal{D})$ at Trent and Trent 
    inserts $ms(\mathcal{D})$ to his key/value store only if $ms(\mathcal{D})$ is
    not registered before.
    \item Afterwards, Alice publishes a smart contract $SC_1$ using Algorithm~\ref{algo:centralize_smart_contract}
    to the Bitcoin network stating the following: \label{item:Alice}
    \begin{itemize}
        \item Move X bitcoins from Alice to Bob if Bob provides $T(ms(\mathcal{D}), RD)$.
        \item Refund X bitcoins from $SC_1$ to Alice if Alice provides 
        $T(ms(\mathcal{D}), RF)$.
    \end{itemize}
    \item Concurrently, Bob published a smart contract $SC_2$ to the Ethereum 
    network using Algorithm~\ref{algo:centralize_smart_contract} stating the following: \label{item:Bob}
    \begin{itemize}
        \item Move Y ethers from Bob to Alice if Alice provides $T(ms(\mathcal{D}), RD)$.
        \item Refund Y ethers from $SC_2$ to Bob if Bob provides $T(ms(\mathcal{D}), RF)$.
    \end{itemize}
    \item After both $SC_1$ and $SC_2$ are published, either Alice or Bob requests a 
    redemption commitment scheme secret from Trent. Trent issues $T(ms(\mathcal{D}), RD)$ 
    only if both $SC_1$ and $SC_2$ are published in their corresponding blockchains
    and the value corresponding to $ms(\mathcal{D})$ is $\bot$ in Trent's
    key/value store.
    
    \item If a participant declines to publish their smart contract or a participant
    changes their mind before \transactionname is committed, any
    participant can requests a refund commitment scheme secret from Trent. 
    Trent issues $T(ms(\mathcal{D}), RD)$ only if the value 
    corresponding to $ms(\mathcal{D})$ is $\bot$ in Trent's
    key/value store.
\end{enumerate}

This protocol achieves atomicity by ensuring that either $T(ms(\mathcal{D}), RD)$ or 
$T(ms(\mathcal{D}), RF)$ can be issued. However, this solution requires the participants 
to trust a centralized intermediary, Trent, and hence risks to derail the whole idea of 
blockchain's trust-free decentralization~\cite{nakamoto2008bitcoin}. In 
Section~\ref{sub:permissionless-witness}, we explain how to replace Trent with a permissionless network of witnesses.

\subsection{\protocolwitness: Permissionless Witness Network}\label{sub:permissionless-witness}

This section presents \protocolwitness, an \protocolname protocol that 
uses a \textit{permissionless blockchain network} of witnesses
to decide whether an $AC^{2}T$ should be committed or aborted.
\textbf{Miners} of this blockchain are collectively the \textbf{witnesses} on $AC^{2}T$s. The \protocolwitness protocol is designed to address the shortcoming of the AC$^3$TW 
protocol that depends on a centralized trusted witness. The AC$^3$TW
protocol uses the trusted witness identity and signatures to implement
the redemption and the refund commitment scheme instances of all smart
contracts in the \transactionname. In contrast, in \protocolwitness, it is
infeasible to use a specific witness identity to implement the redemption
and the refund commitment scheme instances. The witness network is
permissionless. Therefore, the identities of all the miners, the witnesses, in 
this network are not necessarily known and hence cannot be used to implement
these commitment scheme instances. Instead, when a set of participants want 
to execute an \transactionname, they deploy a smart contract $SC_w$ in the 
witness network where $SC_w$ is used to coordinate the \transactionname. 
$SC_w$ has a state that determines the state of the \transactionname. 
$SC_w$ exists in one of three states: \textit{Published} ($P$), 
\textit{Redeem\_Authorized} ($RD_{auth}$), or \textit{Refund\_Authorized} 
($RF_{auth}$). Once $SC_w$ is deployed, $SC_w$ is initialized to the 
state $P$. If the witness network decides to commit the \transactionname, the 
witnesses set $SC_w$'s state to $RD_{auth}$. However, if the witness network 
decides to abort the \transactionname, the witnesses set $SC_w$'s state to $RF_{auth}$.

\begin{figure}[ht!]
	\centering
    \includegraphics[width=0.8\columnwidth]{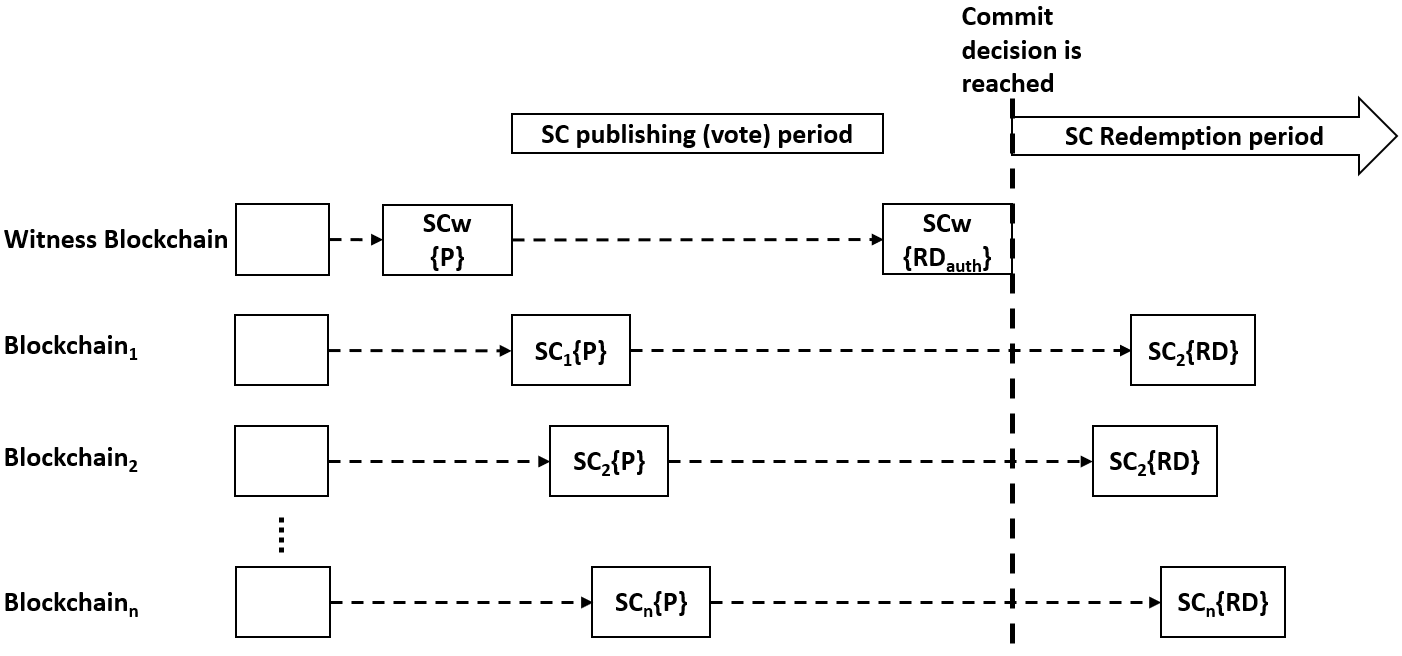}
    \caption{Coordinating $AC^{2}T$s using a permissionless witness network.}
    \label{fig:permissionless-solution}
\end{figure}

Figure~\ref{fig:permissionless-solution} shows an \transactionname that 
exchanges assets among blockchains, $blockchain_1, ..., blockchain_n$
and uses a \textit{witness blockchain} for coordination. Also, it illustrates the 
\protocolwitness protocol steps. For every \transactionname, all 
the participants construct the directed graph $\mathcal{D}$ at some 
timestamp $t$ and multisign it resulting in the multisignature $ms(\mathcal{D})$. A participant registers 
$ms(\mathcal{D})$ in a smart contract $SC_w$ in the witness network where $SC_w$'s state
is initialized to $P$. 
The state $P$ indicates that the participants of the \transactionname agreed on 
$\mathcal{D}$. In addition, the participants agree to conditionally link the 
redeem and the refund actions of their smart contracts in the \transactionname to 
$SC_w$'s states $RD_{auth}$ and $RF_{auth}$ respectively.  Afterwards,
the participants \textit{parallelly} deploy their smart contracts in the blockchains, $blockchain_1, ..., blockchain_n$, as shown
in Figure~\ref{fig:permissionless-solution}. After all the participants
deploy their smart contracts in the \transactionname, a participant may submit
a state
change request to the witness network miners to alter $SC_w$'s state from $P$ to $RD_{auth}$. This request is accompanied by evidence that all smart contracts
in the \transactionname are deployed and correct. Upon receiving this request,
the miners of the witness network verify that $SC_w$'s state is $P$ and that the participants 
of the \transactionname  have indeed deployed their smart contracts in the 
\transactionname in their corresponding blockchains. In addition, the miners
verify that all these smart contracts are in state $P$ and that the redemption and the refund of these smart contracts are conditioned on
$SC_w$'s states $RD_{auth}$ and $RF_{auth}$ respectively. If this verification succeeds,
the miners of the witness network record $SC_w$'s state change to $RD_{auth}$ in their current block.
Once a block that reflects the state change of $SC_w$ to $RD_{auth}$ is mined
in the witness network, the commitment
of the \transactionname is decided and participants can use this block as a commitment
evidence to redeem their assets in the smart contracts of the \transactionname. The 
commit decision is illustrated in Figure~\ref{fig:permissionless-solution} 
using the vertical dotted line.

Similarly, if some participants decline to deploy their smart contracts in 
the \transactionname or a participant changes her mind before the commitment
of the \transactionname, a participant can submit a state
change request to the witness network miners to alter $SC_w$'s state from $P$
to $RF_{auth}$. The miners of the witness network only verify that $SC_w$'s state is $P$. If 
this verification succeeds, the miners of the witness network record $SC_w$'s state change
to $RF_{auth}$ in their current block. Once a block that reflects the state 
change of $SC_w$ to $RF_{auth}$ is mined in the witness network, the 
\transactionname is decided to abort and the participants can use this block as
evidence of the abort to refund their assets in the deployed smart contracts of the \transactionname. 
Note that $SC_w$ is programmed to ensure that $SC_w$'s state can only be changed 
either from $P$ to $RD_{auth}$ or from $P$ to $RF_{auth}$ but no other state 
transition is allowed. This ensures that $SC_w$'s states 
$RD_{auth}$ and $RF_{auth}$ are mutually exclusive. The details of evidence and how 
miners of one blockchain validate evidence in another blockchain without maintaining
a copy of this other blockchain are explained in Section~\ref{sub:evidence}.
Section~\ref{sub:evidence} presents different implementations for evidence submission and
validation.

Algorithm~\ref{algo:permissionless-witness} presents the details
of $SC_w$. $SC_w$ consists of \textit{four} functions: Constructor (Line~\ref{line:w-constructor}),
AuthorizeRedeem (Line~\ref{line:w-authorize-redeem}), AuthorizeRefund (Line~\ref{line:w-authorize-refund}),
and VerifyContracts (Line~\ref{line:w-validate_contracts}). The Constructor initializes $SC_w$
with the participants public keys and the multisigned graph of the \transactionname. This information
is necessary to the witness network miners to later verify the publishing of all smart contracts
in the \transactionname. AuthorizeRedeem alters $SC_w$'s state from $P$ to $RD_{auth}$. 
To call AuthorizeRedeem, a participant provides evidence that shows where the smart
contracts of the \transactionname are published (Line~\ref{line:w-authorize-redeem}).
AuthorizeRedeem first verifies that $SC_w$'s state is currently $P$. In addition,
AuthorizeRedeem verifies that all smart contract in the \transactionname are published 
and correct through a VerifyContracts function call (Line~\ref{line:w-ar-requirements}).
If this verification succeeds, $SC_w$'s state is altered to $RD_{auth}$ 
(Line~\ref{line:w-state-rd}). On the other hand, AuthorizeRefund 
(Line~\ref{line:w-authorize-refund}) verifies only that the state of $SC_w$ is $P$ 
(Line~\ref{line:w-af-requirements}). If true, $SC_w$'s state is altered to $RF_{auth}$ 
(Line~\ref{line:w-state-rf}).

\begin{algorithm}[!h]
\caption{Witness network smart contract as an \transactionname Coordinator.}
 \label{algo:permissionless-witness}

class WitnessSmartContract \{
\begin{algorithmic}[1]
    \State enum State \{Published (P), Redeem\_Authorized ($RD_{auth}$), Refund\_Authorized ($RF_{auth}$)\} \label{line:w-states}
    \State Address [] pk // Addresses of all participants in \transactionname
    \label{line:w-addresses}
    \State Mutlisignature ms // The multisigned graph $\mathcal{D}$
    \label{line:w-multi}
    \State State state
    \label{line:state}
    \Procedure{Constructor}{Address[] pk, MS ms($\mathcal{D})$} \label{line:w-constructor}
        \State this.pk = pk
        \State this.ms = ms($\mathcal{D}$)
        \State this.state = P
    \EndProcedure
    
    \Procedure{AuthorizeRedeem}{Evidence e } \label{line:w-authorize-redeem}
        \State requires (state == P and VerifyContracts(e)) \label{line:w-ar-requirements}
        \State this.state = $RD_{auth}$ \label{line:w-state-rd}
    \EndProcedure
    
    \Procedure{AuthorizeRefund}{} \label{line:w-authorize-refund}
        \State requires (state == P)  \label{line:w-af-requirements}
        \State this.state = $RF_{auth}$ \label{line:w-state-rf}
    \EndProcedure
    
    \Procedure{VerifyContracts}{Evidence e} \label{line:w-validate_contracts}
    \If{e validates all the smart contracts in \transactionname (Check Section~\ref{sub:evidence} for details)}
            \State return true \label{line:w-valid}
    \EndIf
    \State return false \label{line:w-invalid}
    \EndProcedure

\end{algorithmic}
\}
\end{algorithm}

VerifyContracts (Line~\ref{line:w-validate_contracts}) validates that all smart contracts in the \transactionname
are published and correct. For every edge 
$e=(u,v) \in \mathcal{D}.\mathcal{E}$, VerifyContracts finds a matching 
smart contract $SC_e$ in the participant evidence. VerifyContracts ensures that $SC_e$ matches its description in the 
edge $e$. If any parameter in $SC_e$ does not match its description in $e$, VerifyContracts fails and returns 
\textit{false} (Line~\ref{line:w-invalid}). However, if all smart contracts in the provided list are correct,
VerifyContracts returns \textit{true} (Line~\ref{line:w-valid}). VerifyContracts ensures that AuthorizeRedeem cannot be executed 
unless all smart contract in the \transactionname are published and correct and 
hence a commit decision can be reached.

\begin{algorithm}[!h]
\caption{Smart contract for permissionless \protocolname.}
 \label{algo:permissionless_smart_contract}

class PermissionlessSC extends AtomicSwapSC \{

\begin{algorithmic}[1]
    \Procedure{Constructor}{Address r, BC bc, BID bid, TID tid, Depth d} \label{line:permissionless-constructor}
        \State $SC_w$ = retrieveSC(bc, bid, tid)
        \State this.rd = this.rf = ($SC_w$, d)
        \State super(r, this.rd, this.rf) // parent constructor
    \EndProcedure
    
    \Procedure{isRedeemable}{Evidence e} \label{line:permissionless-isredeemable}
        \If{e validates that $SC_w$'s state is $RD_{auth}$ and and that $SC_w$'s state update is at depth $\ge d$}\label{line:permissionless-verify-rd}
            \State return true
        \EndIf
        \State return false
    \EndProcedure
    
    \Procedure{isRefundable}{Evidence e} \label{line:permissionless-isrefundable}
        \If{e validates that $SC_w$'s state is $RF_{auth}$ and that $SC_w$'s state update is at depth $\ge d$}\label{line:permissionless-verify-rf}
            \State return true
        \EndIf
        \State return false
    \EndProcedure
    
\end{algorithmic}
\}
\end{algorithm}

Algorithm~\ref{algo:permissionless_smart_contract}  presents a smart contract
class inherited from the smart contract template in 
Algorithm~\ref{algo:smart_contract_template} in order to use $SC_w$'s state 
as redemption and refund commitment scheme secrets. IsRedeemable
returns \textit{true} if $SC_w$'s state is $RD_{auth}$ (Line~\ref{line:permissionless-verify-rd}), while 
IsRefundable returns \textit{true} if $SC_w$'s state is $RF_{auth}$ 
(Line~\ref{line:permissionless-verify-rf}). As the witness network is permissionless, forks could possibly 
happen resulting in \textit{two} concurrent blocks where $SC_w$'s state is $RD_{auth}$ in the first block and
$SC_w$'s state is $RF_{auth}$ in the second block. To avoid atomicity violations,
participants cannot use a witness network block where $SC_w$'s state is $RD_{auth}$ or 
$RF_{auth}$ in their smart contract redemption and refund respectively unless this block
is buried under at least $d$ blocks in the witness network. As the probability of a fork
of depth $d$ (e.g., 6 blocks in the Bitcoin network~\cite{confirmation}) is negligible,
$SC_w$'s state eventually converges to either $RD_{auth}$ or $RF_{auth}$.

The following steps summarizes the \protocolwitness protocol steps to execute the 
\transactionname shown in Figure~\ref{fig:atomic-swap-graph}:

\begin{enumerate}
    \item Alice and Bob construct the $AC^{2}T$'s graph $\mathcal{D}$ and  
    multisign $(\mathcal{D},t)$ to generate $ms(\mathcal{D})$.
    
    \item Either Alice or Bob registers $ms(\mathcal{D})$ in a smart contract 
    $SC_w$ and publishes $SC_w$ in the witness network setting $SC_w$'s state is $P$.
    $SC_w$ follows Algorithm~\ref{algo:permissionless-witness}.
    \item Afterwards, Alice publishes a smart contract $SC_1$ using Algorithm~\ref{algo:permissionless_smart_contract}
    to the Bitcoin network that states the following: 
    \begin{itemize}
        \item Move X bitcoins from Alice to Bob if Bob provides evidence
        that $SC_w$'s state is $RD_{auth}$.
        \item Refund X bitcoins from $SC_1$ to Alice if Alice provides 
         evidence that $SC_w$'s state is $RF_{auth}$.
    \end{itemize}
    \item Concurrently, Bob publishes a smart contract $SC_2$ to the Ethereum 
    network using Algorithm~\ref{algo:permissionless_smart_contract} stating the following:
    \begin{itemize}
        \item Move Y ethers from Bob to Alice if Alice provides evidence
        that $SC_w$'s state is $RD_{auth}$.
        \item Refund Y ethers from $SC_2$ to Bob if Bob provides 
         evidence that $SC_w$'s state is $RF_{auth}$.
    \end{itemize}
    \item After both $SC_1$ and $SC_2$ are published, any participant can submit
    a state change request of $SC_w$ from $P$ to $RD_{auth}$ to the witness
    network miners. This request is accompanied by evidence that $SC_1$ and $SC_2$ are
    published in the Bitcoin and the Ethereum blockchains respectively.
    The witness network miners first verify that $SC_w$'s state is currently $P$. Then, they verify
    that both $SC_1$ and $SC_2$ are published and correct in their corresponding
    blockchains. If these verifications succeed,
    the miners of the witness network record $SC_w$'s state change to $RD_{auth}$ in their 
    current block. Once a block that reflects the state change of $SC_w$ to 
    $RD_{auth}$ is mined and gets buried under $d$ blocks in the witness network, Alice and Bob can use this
    block as evidence to redeem their assets from $SC_2$ and $SC_1$ 
    respectively.
    
    \item If a participant declines to publish a smart contract, the other participant can submit a state change request of $SC_w$ from $P$ to $RF_{auth}$
    to the witness network miners. The witness network miners verify 
    that $SC_w$'s state is currently $P$. If true, miners record $SC_w$'s state 
    change to $RF_{auth}$ in their current block. 
    Once a block that reflects the state change of $SC_w$ to $RF_{auth}$ is 
    mined and gets buried under $d$ blocks in the witness network, Alice and Bob can use this
    block as evidence to refund their assets from $SC_1$ and $SC_2$ 
    respectively.
    
\end{enumerate}

This protocol uses two blockchain techniques to ensure that $SC_w$'s states $RD_{auth}$ and $RF_{auth}$ are 
\textit{mutually exclusive}. First, it uses the smart contract programmable logic to ensure 
that $SC_w$'s state can only be altered from $P$ to $RD_{auth}$ or from $P$ to $RF_{auth}$. Second,
it uses the longest chain fork resolving technique to resolve forks in the witness network
blockchain. This ensures that in the rare case of forking where one fork chain has $SC_w$'s
state of $RD_{auth}$ and another fork chain has $SC_w$'s state of $RF_{auth}$, the fork is eventually
resolved resulting in either $SC_w$'s state is $RD_{auth}$ or $SC_w$'s state is $RF_{auth}$
but not both.

\subsection{Cross-Chain Evidence Validation} \label{sub:evidence}

This section explains different techniques for the miners of one blockchain, \textit{the validators}, to 
validate the publishing and verify the state of a smart contract deployed in another 
blockchain, \textit{the validated}. The \protocolwitness protocol leverages these techniques in \textit{two}
protocol functions: 1) \texttt{VerifyContracts} in 
Algorithm~\ref{algo:permissionless-witness} and 2) \texttt{IsRedeemable/IsRefundable}
in Algorithm~\ref{algo:permissionless_smart_contract}. In \texttt{VerifyContracts}, the miners
of the witness network need to validate the publishing of all smart contracts in the \transactionname in the blockchains where asset transfers occur. In addition, the miners need to verify that the state of all the published contracts is $P$ and that
the redemption and the refund of these smart contracts are conditioned on $SC_w$'s states. 
Finally, the miners need to verify that for every smart contract in the \transactionname, the sender, the recipient, and the asset match the specification 
of its corresponding edge $e \in \mathcal{D}$. Similarly, in 
\texttt{IsRedeemable/IsRefundable}, miners of the blockchains where asset transfers occur need to verify that
$SC_w$'s state is either $RD_{auth}$, in order to execute the redemption of a smart contract or $RF_{auth}$, 
in order to execute the refund of a smart contract. In the former case, the miners of the witness network
are the validators and the asset blockchains are the validated blockchains. In the latter case, the miners 
of the asset blockchains are the validators and the witness blockchain is the validated blockchain.

A \textbf{simple but impractical} solution is to require all the miners of every 
blockchain to serve as validators to all other blockchains. A blockchain validator 
maintains a copy of the validated blockchain and for every new mined block, a validator 
validates the mined block and adds it to its local copy of the validated blockchain. 
If all mining nodes mine one blockchain and validate all other blockchains, mining 
nodes can consult their local copies of these blockchains to validate the publishing and
hence verify the state of any smart contract in any blockchain. If a participant needs 
the miners of the validator blockchain to validate the publishing of a smart 
contract in the validated blockchain, this participant submits evidence that comprises a block id and a 
transaction id of the smart contract in the validated blockchain to the miners of the validator blockchain. 
This evidence is easily verified by the mining node of the validator blockchain by consulting their copy 
of the validated blockchain. However, this full replication of all the blockchains in all 
the mining nodes is impractical. Not only does it require massive 
processing power to validate all blockchains, but also it requires 
significant storage and network capabilities at each mining node.

Alternatively, miners of one blockchain can run \textit{light 
nodes}~\cite{buterin2014next} of other blockchains. A light node, as 
defined in~\cite{buterin2014next}, is a node that downloads only the block
headers of a blockchain, verifies the proof of work of these block 
headers, and downloads only the blockchain branches that are associated 
with the transactions of interest to this node. For example, the mining
nodes of the witness network can run light nodes for the Bitcoin network to verify the $AC^{2}T$'s smart
contracts in the Bitcoin network. A participant who wants the witness network mining 
nodes to validate a smart contract in the Bitcoin network submits evidence that 
consist of a block id and a transaction id of the smart contract in the Bitcoin network 
to the miners of the witness network. The miners of the witness network use their 
Bitcoin light nodes to validate the smart contract publishing and 
verify the smart contract state. This solution 
requires miners to mine for one blockchain and maintain light nodes for all other 
blockchains. Although the cost of maintaining a light node is much cheaper than
maintaining a blockchain full copy, running a light node for all blockchains 
does not scale as the number of blockchains increases.

It is important to mention that the previous two techniques put the evidence
validation responsibility of one blockchain on the miners of another blockchain. 
In addition, they require changes in the
current infrastructure by requiring the miners of one blockchain to either maintain a
full copy or a light node of other blockchains.

\begin{figure}[ht!]
	\centering
    \includegraphics[width=0.8\columnwidth]{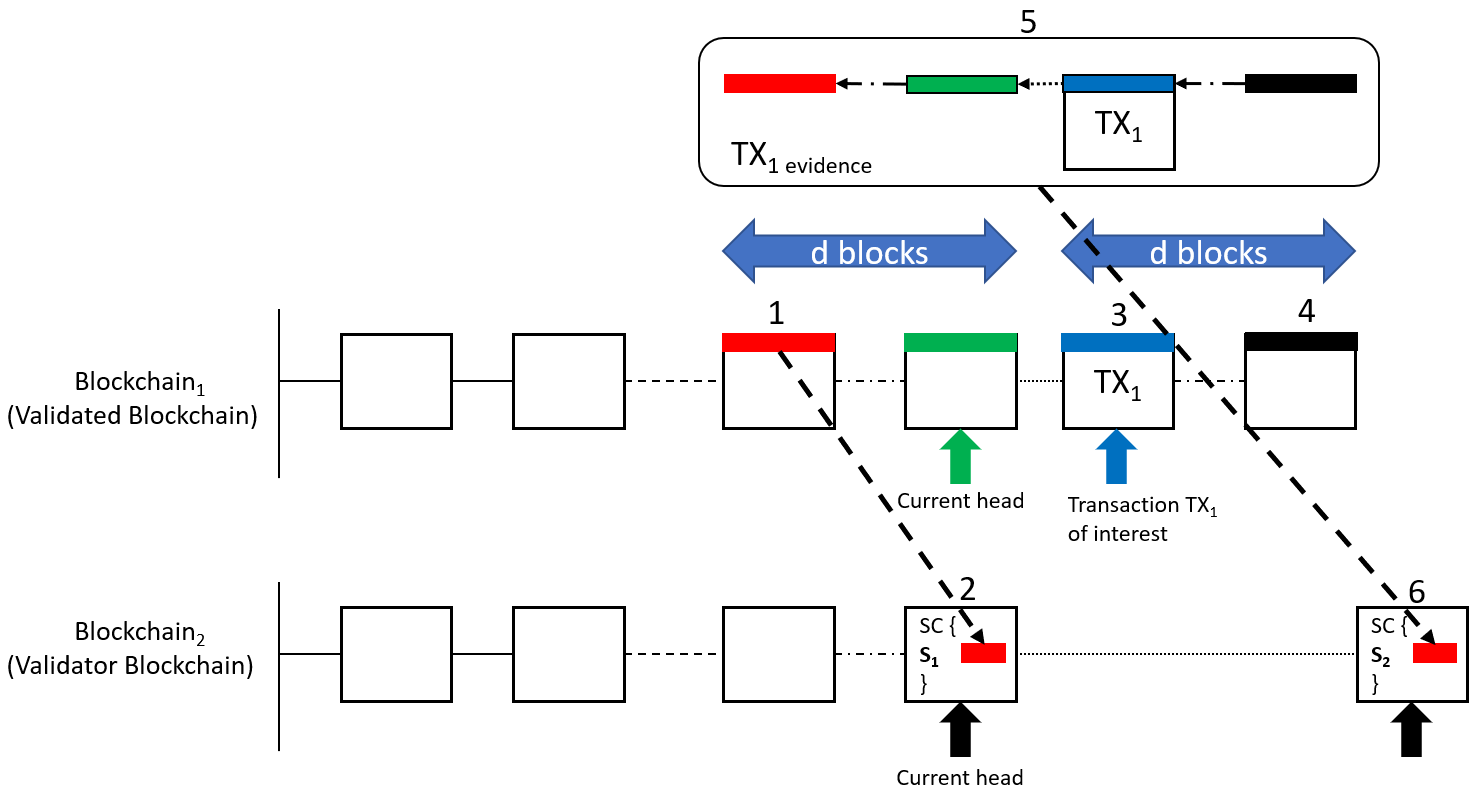}
    \caption{How miners of one blockchain could verify transactions in another blockchain without having
    a copy or a light node of this blockchain.}
    \label{fig:evidence}
\end{figure}

\textbf{Our proposal:} Another way to allow one blockchain, the validator, to validate the publishing and 
verify the state of a smart contract in another blockchain, the validated, is to push the 
validation logic into the code of a smart contract in the validator blockchain. A smart contract in the 
validator blockchain is deployed and stores the header of a \textit{stable block} in the validated 
blockchain. A stable block is a block at depth $d$ from the current head of the validated blockchain such that the 
probability of forking the blockchain at this block is negligible (i.e., a block at depth $\ge$ 6 in the 
Bitcoin blockchain~\cite{confirmation}). A participant who deploys the smart contract in the validator
blockchain stores the block header of a stable block of the validated blockchain as an attribute in the smart 
contract object in the validator blockchain. When the transaction or the smart contract of interest
takes place in a block in the validated blockchain and after this block becomes a stable block, at depth $d$,
a participant can submit evidence of the transaction occurrence in the validated blockchain to the 
miners of the validator blockchain. This evidence comprises the headers of all the blocks that follow the
stored stable block in the smart contract of the validator blockchain in addition to the block 
where the transaction of interest took place. The evidence is submitted to the validator smart contract 
via a function call. This smart contract function validates that the passed headers follow the header of 
the stable block previously stored in the smart contract object and that the proof of work of each header is
valid. In addition, the function verifies that the transaction of interest indeed took place and that the
block of this transaction is stable and buried under $d$ blocks in the validated blockchain.

Figure~\ref{fig:evidence} shows an example of a validator blockchain, blockchain$_2$, that validates the
occurrence of transaction $TX_1$ in the validated blockchain, blockchain$_1$. In this example, there 
exists a smart contract $SC$ that gets deployed in the current head block of blockchain$_2$ (labeled by 
number 2 in Figure~\ref{fig:evidence}). $SC$ has an initial state $S_1$ and stores the header of a stable
block, at depth $d$, in blockchain$_1$ (labeled by number 1). This header is
represented by a red rectangle inside $SC$. $SC$'s state is altered from $S_1$ to $S_2$ if evidence is 
submitted to miners of blockchain$_2$ that proves that $TX_1$ took place in blockchain$_1$ in some block 
after the stored stable block in $SC$. When $TX_1$ takes place in blockchain$_1$ (labeled by number 3) 
and its block becomes a stable block at depth $\ge d$ (labeled by number 4), a participant submits
the evidence (labeled by number 5) to the miners of blockchain$_2$ through $SC$'s function call (labeled
by number 6). This function takes the evidence as a parameter and verifies that blocks in the evidence
took place after the stored stable block in $SC$. This verification ensures that the header of each 
evidence block includes the hash of the header of the previous block starting from the stored stable 
block in $SC$. In addition, this function verifies the proof of work of each evidence's block header. 
Finally, the function validates that $TX_1$ took place in some block in the evidence blocks and that 
this block has already become a stable block. If this verification succeeds, the state of $SC$ is altered from $S_1$ to 
$S_2$. This technique allows miners of one blockchain to verify transactions and smart contracts in 
another blockchain without maintaining a copy of this blockchain. In addition, this technique puts
the evidence validation responsibility on the developer of the validator smart contract.

\section{\protocolwitness Analysis} \label{sec:generalizations}

This section analyzes the \protocolwitness protocol introduced
in Section~\ref{sub:permissionless-witness}. First, we establish that 
the proposed protocol ensures atomicity. Then we analyze the scalability of the 
witness network and how it affects the scalability of the commitment protocol. 
Finally, we 
explain how this protocol extends the functionality of previous proposals 
in~\cite{atomicNolan, herlihy2018atomic}.

\subsection{\protocolwitness: Atomicity Correctness Proof} \label{sub:proof}

\begin{lemma}
Assume no forks in the witness network, then the \protocolwitness protocol is atomic.
\end{lemma}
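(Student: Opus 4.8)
The plan is to prove atomicity by showing that the two "decision" states of the coordinating witness contract $SC_w$ --- namely $RD_{auth}$ and $RF_{auth}$ --- are mutually exclusive, and that the state of every asset-blockchain smart contract is causally tied to exactly one of these two states. Atomicity here means the all-or-nothing property: either every $SC_e$ in the \transactionname gets redeemed, or every $SC_e$ gets refunded, but never a mix of the two. Since the lemma assumes no forks in the witness network, I can treat $SC_w$'s state as a single well-defined value that evolves over a totally ordered chain of blocks.

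First I would establish the key invariant governing $SC_w$: by the programmable logic of Algorithm~\ref{algo:permissionless-witness}, the only permitted state transitions are $P \to RD_{auth}$ (via \texttt{AuthorizeRedeem}, which additionally \texttt{requires(state == P)}) and $P \to RF_{auth}$ (via \texttt{AuthorizeRefund}, which also \texttt{requires(state == P)}). Both guards demand that the current state be $P$, and each transition moves $SC_w$ out of $P$ into a terminal state with no outgoing edges. Under the no-fork assumption, the witness chain linearizes these calls, so whichever authorize-call is recorded first wins and permanently fixes $SC_w$'s state; any subsequent call fails its \texttt{requires} guard. Hence I would conclude that exactly one of $RD_{auth}$, $RF_{auth}$ is ever reachable for a given $SC_w$ --- this is the mutual-exclusion core of the argument.

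Next I would lift this single-contract mutual exclusion to the whole \transactionname. By Algorithm~\ref{algo:permissionless_smart_contract}, each asset contract $SC_e$ becomes redeemable only on evidence that $SC_w$'s state is $RD_{auth}$ (buried under $\ge d$ blocks), and refundable only on evidence that $SC_w$'s state is $RF_{auth}$ (also at depth $\ge d$); these are the \texttt{IsRedeemable}/\texttt{IsRefundable} checks. Because all contracts in the \transactionname read the \emph{same} $SC_w$, and because $SC_w$ resolves to a unique terminal state, every $SC_e$ is gated on the same global bit. If $SC_w = RD_{auth}$, then no $SC_e$ can ever satisfy \texttt{IsRefundable}, so refunds are globally impossible and only redemptions can occur; symmetrically if $SC_w = RF_{auth}$. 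I would also invoke \texttt{VerifyContracts}: the transition to $RD_{auth}$ is itself guarded by the requirement that \emph{all} $SC_e$ are published and correctly conditioned on $SC_w$, so reaching $RD_{auth}$ certifies that every contract is present and redeemable --- giving the "all are redeemed" branch rather than "some are redeemed."

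The main obstacle I anticipate is not the logical skeleton above but pinning down the role of the depth-$d$ confirmation and the no-fork hypothesis precisely. The lemma explicitly assumes no forks in the witness network, which is exactly what licenses me to treat $SC_w$'s state as a single totally ordered value and sidesteps the delicate case (discussed just before the lemma) where one fork branch holds $RD_{auth}$ and another holds $RF_{auth}$. I would therefore be careful to state that, absent forks, the depth-$d$ burial is a liveness/finality device rather than a safety device, so the only safety-relevant facts are the two \texttt{requires} guards and the uniqueness of $SC_w$'s terminal state. The remaining subtlety is ruling out that a participant fabricates false evidence of $SC_w$'s state at the asset chains; here I would appeal to the cross-chain evidence validation of Section~\ref{sub:evidence}, which ensures \texttt{IsRedeemable}/\texttt{IsRefundable} accept only genuine, proof-of-work-backed evidence of $SC_w$'s recorded state, so no participant can unlock a contract on a state $SC_w$ never actually entered.
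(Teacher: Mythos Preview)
Your proposal is correct and follows essentially the same approach as the paper: both arguments hinge on the mutual exclusion of $SC_w$'s terminal states $RD_{auth}$ and $RF_{auth}$, derived from the \texttt{requires(state == P)} guards together with the no-fork assumption linearizing the witness chain, and then tie every $SC_e$'s redeem/refund to that unique terminal state. The paper phrases this as a short proof by contradiction (assume some $SC_i$ redeemed and some $SC_j$ refunded, deduce both terminal states appear, contradict the allowed transitions), whereas you give a direct argument and additionally spell out the roles of depth-$d$ confirmation and cross-chain evidence validation that the paper's proof leaves implicit; these are presentational differences, not a different route.
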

 
\begin{proof}
Assume an \transactionname executed by the \protocolwitness protocol and the atomicity
of this transaction is violated. This atomicity 
violation implies that there exists two smart contract $SC_i$ and $SC_j$ in 
\transactionname where $SC_i$ is redeemed and $SC_j$ is refunded. 
The redemption of $SC_i$ implies that there exists a block in the witness
network where $SC_w$'s state is $RD_{auth}$ while the refund of $SC_j$ implies
that there exists a block in the witness network where $SC_w$'s state is $RF_{auth}$.
Since $SC_w$ is programmed to allow only the state transitions either from $P$ to $RD_{auth}$
or from $P$ to $RF_{auth}$,  the two function calls to alter $SC_w$'s state from $P$ 
to $RD_{auth}$ and from $P$ to $RF_{auth}$ cannot take effect in one block. Miners of
the witness network shall accept one and reject the other. Therefore,
these two state changes must be recorded in two separate blocks. As there exists no 
forks in the witness network, one of these two blocks must
\textit{happen before} the other. This implies that either $SC_w$'s state
is altered from $RD_{auth}$ in one block to $RF_{auth}$ in a following block or altered 
from $RF_{auth}$ in one block to $RD_{auth}$ in a following block. However,
only the state transitions from $P$ to $RD_{auth}$ or from $P$ to $RF_{auth}$
are allowed and no other state transition is permitted leading to a contradiction.
\hfill$\Box$
\end{proof}

\begin{lemma}
Let $\epsilon$ be a negligible probability of forks in the permissionless 
witness network, then \protocolwitness protocol is atomic with a probability $1-\epsilon$.
\end{lemma}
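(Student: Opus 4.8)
The plan is to reduce the probabilistic claim to the deterministic Lemma already established for the fork-free case by conditioning on the event that no deep fork occurs in the witness network. First I would recall that, by the previous Lemma, whenever the witness chain exhibits no fork the protocol is atomic: the state machine of $SC_w$ admits only the transitions $P \to RD_{auth}$ and $P \to RF_{auth}$, so on a single chain at most one of the two authorizations can ever be recorded, and therefore redemption and refund of the smart contracts in the \transactionname cannot both be enabled.

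Next I would isolate the \emph{only} scenario in which the permissionless setting can break this guarantee. A violation requires some $SC_i$ to be redeemed and some $SC_j$ to be refunded, which in turn requires a witness block carrying $SC_w = RD_{auth}$ \emph{and} a witness block carrying $SC_w = RF_{auth}$. Because both transitions originate from $P$ and cannot coexist on one chain, these two blocks must lie on two distinct branches of a fork. Here I would invoke the depth rule built into \texttt{isRedeemable}/\texttt{isRefundable}: a participant is permitted to act on an authorization block only after it is buried under at least $d$ blocks. Hence for both an $RD_{auth}$ block and a conflicting $RF_{auth}$ block to be usable, the witness network must sustain two competing branches each extended to depth $d$, that is, a fork of depth $d$.

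The decisive step is then purely probabilistic. By hypothesis the probability of a depth-$d$ fork in the witness network is $\epsilon$. On the complementary event, of probability $1-\epsilon$, the longest-chain rule resolves every shallower fork before any authorization block reaches depth $d$, so the chain the participants rely upon is effectively fork-free and the earlier Lemma applies verbatim, yielding atomicity. Conditioning on this event shows that \protocolwitness is atomic with probability at least $1-\epsilon$.

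The main obstacle I anticipate is the middle step: making rigorous the claim that a usable pair of conflicting authorizations is \emph{equivalent} to a depth-$d$ fork, rather than merely implied by one. One must argue that conflicting forks of depth below $d$ can never be exploited, because the depth-$d$ requirement in the redeem/refund verification blocks any action until the fork is resolved by the longest-chain rule, after which only one of $RD_{auth}$, $RF_{auth}$ survives on the canonical chain. Pinning down this reduction, and being explicit that $\epsilon$ is precisely the parameter governing the depth-$d$ fork event assumed in the statement, is what carries the proof.
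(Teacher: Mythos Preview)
Your proposal is correct and follows essentially the same logical route as the paper: an atomicity violation forces both an $RD_{auth}$ block and an $RF_{auth}$ block to coexist in the witness network, which can only occur on distinct branches of a fork, so the violation probability is bounded by the fork probability $\epsilon$. The paper packages this as a short proof by contradiction (assume a violation probability $p \gg \epsilon$ and derive that the fork probability must then be $p$), whereas you condition directly on the no-fork event and invoke the depth-$d$ requirement more explicitly than the paper itself does, but the underlying argument is the same.
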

 
\begin{proof}
Assume an \transactionname executed by the \protocolwitness protocol and the atomicity
of this transaction is violated with a probability $p >>> \epsilon$. This atomicity 
violation implies that there exists two smart contract $SC_i$ and $SC_j$ in 
\transactionname where $SC_i$ is redeemed and $SC_j$ is refunded. 
The redemption of $SC_i$ implies that there exists a block in the witness
network where $SC_w$'s state is $RD_{auth}$ while the refund of $SC_j$ implies
that there exists a block in the witness network where $SC_w$'s state is $RF_{auth}$.
As $SC_w$'s states $RD_{auth}$ and $RF_{auth}$ are conflicting states, this 
implies that the block where $SC_w$'s state update to $RD_{auth}$ occurs must exist
in a fork from the block where $SC_w$'s state update to $RF_{auth}$ occurs. The
atomicity violation of the \transactionname with a probability $p$ implies that
the fork probability in the witness network must be $p$ leading to a 
contradiction.
\hfill$\Box$
\end{proof}

\subsection{The Scalability of \protocolwitness}
One important aspect of \protocolname protocols is \textit{scalability}. Does
using a permissionless network of witnesses to coordinate AC$^2$Ts limit
the scalability of the \protocolwitness protocol? In this section, we argue that
the answer is \textit{no}. To explain this argument, we first develop an 
understanding of the properties of executing AC$^2$Ts and the role of the 
witness network in executing AC$^2$Ts.

An \transactionname is a distributed transaction that consists of sub-transactions.
Each sub-transaction is executed in a blockchain. An \protocolname protocol coordinates
the atomic execution of these sub-transactions across several blockchains.
An \protocolname protocol must ensure an 
atomic execution of the distributed transaction. This atomic execution of a distributed transaction requires
the ACID~\cite{gray1981transaction,haerder1983principles} execution of every sub-transaction in this distributed transaction in addition
to the atomic execution of the distributed transaction itself. The ACID execution of a 
sub-transaction executed within a single blockchain is guaranteed by the miners of
this blockchain. Miners use many techniques including mining, verification, and
the miner's rationale to join the longest chain in order to implement ACID executions of
transactions within a single blockchain. The atomicity of the distributed transaction
is the responsibility of the distributed transaction coordinator. Therefore, the main
role of the witness network in the \protocolwitness protocol is to ensure the atomicity 
of the \transactionname. Since the atomicity coordination of AC$^2$Ts is 
\textit{embarrassingly parallel}, different witness network can be used to coordinate
different AC$^2$Ts.

Assume two concurrent AC$^2$Ts, $t_1$ and $t_2$. The atomic execution of $t_1$ 
does not require any coordination with the atomic execution of $t_2$. Each 
\transactionname requires its witness network
to ensure that either all sub-transactions in the \transactionname are 
executed or none of them is executed. Therefore, $t_1$ and 
$t_2$ do not have to be coordinated by the same witness network. 
$t_1$ can be coordinated by one witness network while $t_2$ can be coordinated 
by another witness network. If $t_1$
and $t_2$ conflict at the sub-transaction level, this conflict is resolved by
the miners of the blockchain where these sub-transactions are executed. Therefore,
using a permissionless witness network to coordinate AC$^2$Ts does not limit
the scalability of the \protocolwitness protocol. Different permissionless networks are
used to coordinate different AC$^2$Ts. For example, the Bitcoin network can be used
to coordinate $t_1$ while the Ethereum network can be used to coordinate $t_2$. Once a
performance bottleneck is detected in a permissionless witness network, other
permissionless networks can be potentially used to coordinate other AC$^2$Ts. 
This ensures that the transaction throughput of an \protocolname protocol is only 
bounded by the transaction throughput of the blockchains used to exchange 
the assets in an \transactionname but not the witness network.

\subsection{Handling Complex AC$^2$T Graphs}\label{sub:cyclic-transactions}

\begin{figure}[ht!]
	\centering
    \includegraphics[width=0.7\columnwidth]{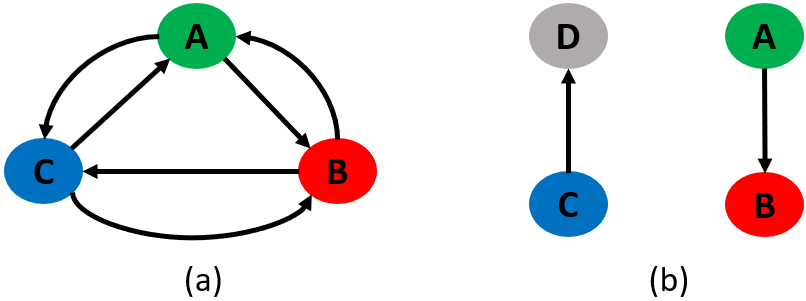}
    \caption{Examples of complex graphs handled by the \protocolwitness protocol: (a) cyclic and (b) disconnected.}
    \label{fig:cyclic}
\end{figure}

One main improvement of the \protocolwitness protocol over the 
state-of-the-art \protocolname protocols 
in~\cite{herlihy2018atomic,atomicNolan} is its ability to coordinate
the atomic execution of AC$^2$Ts with complex graphs.
This improvement is achieved because the \protocolwitness protocol
does not depend on the rational behavior of the participants 
in the \transactionname to ensure atomicity. Instead, the protocol 
depends on a permissionless network of witnesses to coordinate the
atomic execution of AC$^{2}$Ts. Once the participants agree on the 
\transactionname graph and register it in the smart contract $SC_w$ 
in the witness network, participants cannot violate atomicity as the 
commit and the abort decisions are decided by the state of $SC_w$.
The state transitions of $SC_w$ are witnessed and verified by the miners
of the witness network. Therefore, the publishing order of the smart 
contracts in the \transactionname cannot result in an advantage to 
any coalition among the participants. Participants can concurrently 
publish their smart contracts in the \transactionname, both in 
Figures~\ref{fig:atomic-swap-graph} 
and~\ref{fig:cyclic}, without worrying about the maliciousness of any 
participant.

Figure~\ref{fig:cyclic} illustrates two complex graph examples that 
either cannot be atomically executed  by the protocols 
in~\cite{atomicNolan, herlihy2018atomic} or require additional 
mechanisms and protocol modifications to be atomically executed. These
graphs appear in supply-chain applications. Both Nolan's and Herlihy's
single leader protocol require the \transactionname graph to be acyclic
once the leader node is removed. Therefore, both protocols fail to execute the transaction graph shown in Figure~\ref{fig:cyclic}a.
Removing any node from the graph in Figure~\ref{fig:cyclic}a still
results in a cyclic graph. Herlihy presents a multi-leader protocol
in~\cite{herlihy2018atomic} to handle cyclic graphs. However,
both Nolan's and Herlihy's protocols fail to handle disconnected graphs
similar to the graph shown in Figure~\ref{fig:cyclic}b. On the other
hand, the \protocolwitness protocol ensures the 
atomic execution of AC$^2$Ts irrespective of the AC$^2$T's 
graph structure.

\section{Evaluation}\label{sec:evaluation}

This section analytically compares the performance and the overhead of 
the \protocolwitness protocol presented in Section~\ref{sub:permissionless-witness} 
to the state-of-the-art atomic swap protocol presented by Herilhy 
 in~\cite{herlihy2018atomic}. First, we compare the latency of AC$^2$Ts
 as the diameter of the transaction graph $\mathcal{D}$ increases in 
 Section~\ref{sub:latency}. Then, the monetary cost overhead of using a 
 permissionless network of witnesses to coordinate the \transactionname 
 is analyzed in Section~\ref{sub:cost}. Afterwards, an analysis on how
 to choose the witness network is developed in Section~\ref{sub:choosing-witness}.
 Finally, an analysis of the \transactionname throughput as the witness network 
is chosen from the top-4 permissionless cryptocurrencies, sorted by market cap,
is presented in Section~\ref{sub:throughput}.

\subsection{Latency} \label{sub:latency}
The \transactionname latency is defined as the difference between the 
timestamp $t_s$ when an \transactionname is started and the timestamp $t_c$
when the \transactionname is completed. $t_s$ marks the moment when participants 
in the \transactionname start to agree on the \transactionname graph 
$\mathcal{D}$. $t_c$ marks the completion of all the asset transfers in
the \transactionname by redeeming all the smart contracts in \transactionname.

Let $\Delta$ be enough time for any participant to publish a smart contract
in any permissionless blockchain, or to change a smart contract state through
a function call of this smart contract, and for this change to be publicly 
recognized~\cite{herlihy2018atomic}. Also, let $Diam(\mathcal{D})$ be the 
\transactionname graph diameter. The $Diam(\mathcal{D})$ is the length of the
longest path from any vertex in $\mathcal{D}$ to any other vertex in $\mathcal{D}$
including itself.

The single leader atomic swap protocol presented in~\cite{herlihy2018atomic}
has \textit{two} phases: the \transactionname smart contract sequential 
deployment phase and the \transactionname smart contract sequential 
redemption phase. The deployment phase 
requires the deployment of all smart contracts in the \transactionname, $N$, 
where exactly $Diam(\mathcal{D}) \leq N$ smart contracts are sequentially
deployed resulting in a latency of $\Delta \cdot Diam(\mathcal{D})$. Similarly,
the redemption phase requires the redemption of all smart contracts in the 
\transactionname, $N$, where exactly $Diam(\mathcal{D}) \leq N$ smart contracts 
are sequentially redeemed resulting in a latency of $\Delta \cdot Diam(\mathcal{D})$.
The overall latency of an \transactionname that uses this protocol
equals to the latency summation of these two phases 
$2 \cdot \Delta \cdot Diam(\mathcal{D})$. Figure~\ref{fig:latency-herlihy}
visualizes the two phases of the protocol where time advances from left to right. 
As shown, some smart contracts (e.g., $SC_2$, $SC_3$, and $SC_4$) 
could be deployed and redeemed in parallel but
there are exactly $Diam(\mathcal{D})$ sequentially deployed and $Diam(\mathcal{D})$
sequentially redeemed smart contracts resulting in an overall latency of $2 \cdot 
\Delta \cdot Diam(\mathcal{D})$. Note that the protocol allows the parallel
deployment and redemption of some smart contracts as long as they do not
lead to an advantage to either a participant or a coalition in the \transactionname.  

\begin{figure}[ht!]
	\centering
    \includegraphics[width=\columnwidth]{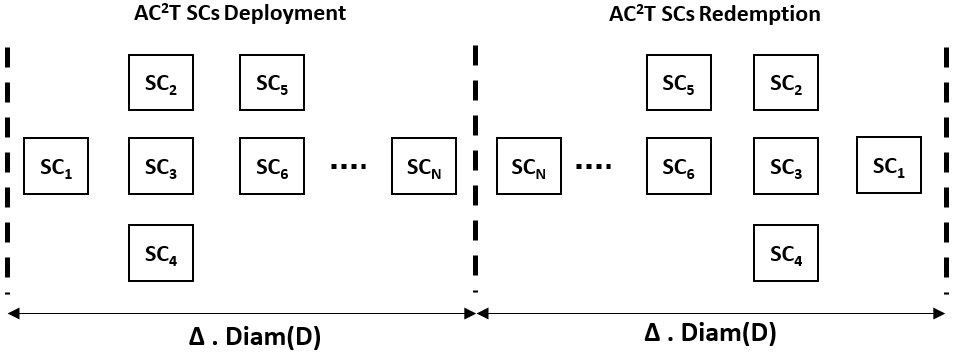}
    \caption{Overall transaction latency of $2 \cdot \Delta \cdot Diam(\mathcal{D})$ when the single leader atomic swap
    protocol in~\cite{herlihy2018atomic} is used.}
    \label{fig:latency-herlihy}
\end{figure}

On the other hand, the \protocolwitness protocol 
has \textit{four} phases: the witness network smart contract deployment phase, 
the \transactionname smart contract parallel deployment phase, the witness network smart contract state change phase, and the \transactionname smart contract parallel 
redemption phase. The witness network smart contract deployment requires
the deployment of the smart contract $SC_w$ in the witness network resulting in
a latency of $\Delta$. The \transactionname smart contract parallel deployment 
requires the parallel deployment of all smart contracts, N, in the 
\transactionname resulting in a latency of $\Delta$. The witness network smart 
contract state change requires a state change in $SC_w$ either from $P$ to 
$RD_{auth}$ or from $P$ to $RF_{auth}$ through $SC_w$'s Redeem or Refund function 
calls resulting in a latency of $\Delta$. Finally, the \transactionname 
smart contract parallel redemption requires the parallel redemption of all 
smart contracts, N, in the \transactionname resulting in a latency of $\Delta$.
The overall latency of an \transactionname that uses this protocol
equals to the latency summation of these four phases 
$4 \cdot \Delta$. 

\begin{figure}[ht!]
	\centering
    \includegraphics[width=0.6\columnwidth,keepaspectratio]{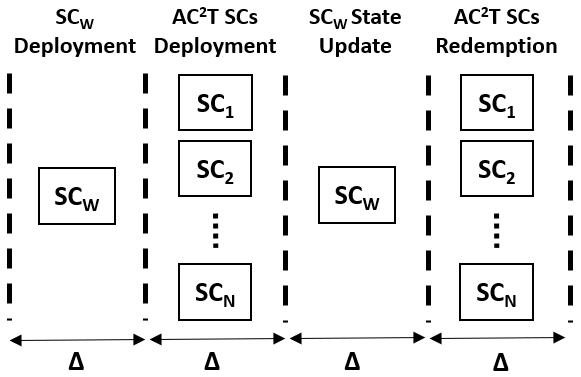}
    \caption{Overall transaction latency of $4 \cdot \Delta$ when the \protocolwitness
    protocol in Section~\ref{sub:permissionless-witness} is used.}
    \label{fig:latency-witness}
\end{figure}

Figure~\ref{fig:latency-witness}
visualizes the four phases of the \protocolwitness protocol where time advances from left to right. 
As shown, all smart contracts in the \transactionname are parallelly deployed and 
parallelly redeemed resulting in an overall latency of $4 \cdot 
\Delta$.

\begin{figure}[ht!]
	\centering
    \includegraphics[width=0.6\columnwidth]{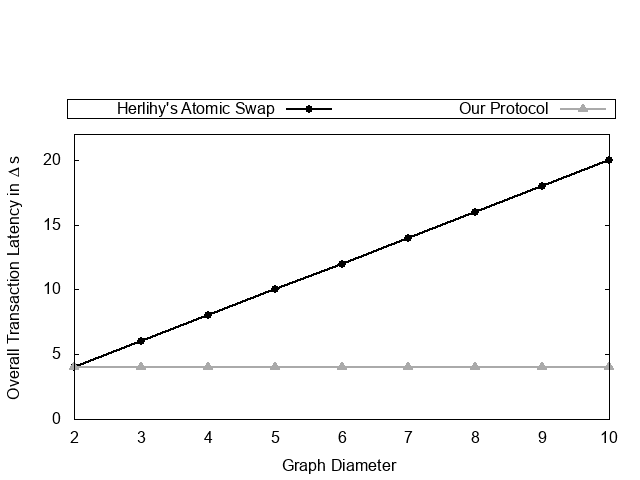}
    \caption{The overall \transactionname latency in $\Delta$s as the graph diameter, $Diam(\mathcal{D})$, increases.}
    \label{fig:overall-transaction-latency}
\end{figure}

Figure~\ref{fig:overall-transaction-latency} compares the overall \transactionname
latency in $\Delta$s resulting from Herlihy's protocol in~\cite{herlihy2018atomic} and the \protocolwitness protocol as the transaction graph diameter, $Diam(\mathcal{D})$
increases. As shown, the \protocolwitness protocol achieves a constant latency of $4 \cdot \Delta$ irrespective
of the transaction graph diameter value while Herlihy's protocol achieves a linear
latency with respect to the transaction graph diameter value. Note that the smallest 
transaction graph consists of \textit{two} nodes and \textit{two} edges
and hence the graph diameter in Figure~\ref{fig:overall-transaction-latency}
starts at 2.

\subsection{Cost Overhead} \label{sub:cost}
This section analyzes the monetary cost overhead of the \protocolwitness protocol 
in comparison to Herlihy's atomic swap protocol in~\cite{herlihy2018atomic}. As explained in Section~\ref{sec:background},
miners charge end-users a fee for every smart contract deployment and every smart contract function
call that results in a smart contract state change. This fee is necessary to incentivize miners
to add smart contracts and append smart contract state changes to their mined
blocks.
As shown in Figures~\ref{fig:latency-herlihy} and~\ref{fig:latency-witness}, both
protocols deploy a smart contract for every edge $e \in \mathcal{E}$ where $\mathcal{E}$ 
is the edge set of the \transactionname graph $\mathcal{D}$. This results in the deployment
of $N = |\mathcal{E}|$ smart contracts in the smart contract deployment phase of both protocols. 
In addition, both 
protocols invoke a redemption or a refund function call for every deployed smart contract 
in the \transactionname resulting in $N$ function calls. However, the
\protocolwitness protocol requires to deploy an additional smart contract $SC_w$ in 
the witness network in addition to an
additional function call to change $SC_w$'s state either from $P$ to $RD_{auth}$
or from $P$ to $RF_{auth}$. The cost of $SC_w$ deployment and $SC_w$ state transition
function call comprises the monetary cost overhead of the 
\protocolwitness protocol. Let $f_d$ be the 
deployment fee of any smart contract $SC_i \in AC^2T$ and $f_{fc}$ be the function
call fee of any smart contract function call. Then, the overall \transactionname
fee of Herlihy's protocol is $N \cdot (f_d + f_{fc})$ while the overall 
\transactionname fee of the \protocolwitness protocol is $(N+1) \cdot (f_d + f_{fc})$.
This analysis shows that \protocolwitness imposes a monetary cost overhead of 
$\frac{1}{N}$ the transaction fee of Herilhy's protocol assuming equal deployment 
and functional call fees for all the smart contracts in the \transactionname.

But, \textit{How much does
it cost in dollars to deploy a smart contract and make a smart contract function call?} 
The answer is, it depends. Many factors affect a smart contract fee such as the length of
the smart contract and the average transaction fee in the smart contract's blockchain~\cite{smartcontractfees, avgtxnfee}. Ryan~\cite{smartcontractfees} shows that
the cost of deploying a smart contract with a similar logic to $SC_w$'s logic in the Ethereum 
network costs approximately \$4 when the ether to USD rate is \$300. Currently,
this costs approximately \$2 assuming the current ether to USD rate of \$140.

\subsection{Choosing the Witness Network} \label{sub:choosing-witness}

This section develops some insights on how to choose the witness network
for an \transactionname. This choice has to consider the risk of 
choosing different permissionless blockchain networks as the witness 
of an \transactionname and the relationship between this risk and the 
value of the assets exchanged in this \transactionname. As the state of 
the witness smart contract $SC_w$ determines the state of an 
\transactionname, forks in the witness network present a risk to
the atomicity of the \transactionname. A fork in the witness network
where one block has $SC_w$'s state of $RD_{auth}$ and another block
has $SC_w$'s state of $RF_{auth}$ might result in an atomicity violation
leading to an asset loss of some participants in the \transactionname.
To overcome possible violation, our \protocolwitness protocol does not
consider a block where $SC_w$'s state is either $RD_{auth}$ or 
$RF_{auth}$ as a commit or an abort evidence until this
block is buried under $d$ blocks in the witness network. This technique
of resolving forks by waiting is presented in~\cite{nakamoto2008bitcoin}
and used by Pass and Shi in~\cite{pass2017hybrid} to eliminate uncertainty
of recently mined blocks. This fork resolution technique is
efficient as the probability of eliminating a fork within 
$d$ blocks is sufficiently high.

However, a malicious participant in an \transactionname could
fork the witness blockchain for $d$ blocks in order to steal
the assets of other participants in the \transactionname. To execute
this attack, a malicious participant rents computing resources
to execute a 51\% attack on the witness network. The cost of an hour of
51\% attack for different cryptocurrency blockchains is presented in
~\cite{51attack}. If the cost of running this attack for $d$ blocks is 
less than the expected gains from running the attack, a malicious
participant is incentivized to act maliciously.

To prevent possible maliciousness, the cost of running a 51\% attack on 
the witness network for $d$ blocks must be set to exceed the potential
gains of running the attack. Let $V_a$ be the value of the potentially
stolen assets if the attack succeeds. Also, let $C_h$ be the hourly
cost of a 51\% attack on the witness network. Finally, let $d_h$ be
 the expected number of mined blocks per hour for the witness blockchain
(e.g., $d_h$ = 6 blocks / hour for the Bitcoin blockchain). The value
$d$ must be set to ensure that $V_a$ is less than the cost of running
the attack for $d$ blocks $\frac{d \cdot C_h}{d_h}$. Therefore
$d$ must be set to achieve the inequality $d > \frac{V_a \cdot d_h}{C_h}$
in order to disincentivize maliciousness. For example, let $V_a$ be \$1M
and assume that the Bitcoin network is used to coordinate this transaction.
The cost per hour of a 51\% attack on the Bitcoin network is 
approximately $C_h =\$300K$. Therefore, $d$ must be set to be $> \frac{\$1M \cdot 
6}{\$300K} = 20$.

\subsection{Throughput} \label{sub:throughput}

The throughput of the AC$^2$Ts is the number of transactions per 
second (tps) that could be processed assuming that every 
\transactionname spans a fixed set of blockchains and is witnessed 
by a fixed witness blockchain. For an \transactionname that spans 
multiple blockchains, the throughput is bounded by the 
slowest involved blockchain in the \transactionname
including the witness network. Let $tps_i$ be the throughput of 
blockchain $i$. The throughput of the AC$^2$Ts
that span blockchains i, j, .., n and are witnessed by
the blockchain w equals to $min(tps_i, tps_j .., tps_n, tps_w$).

\begin{table}[hp!]
\begin{center}
\begin{tabular}{|l|l|l|l|}
\hline
\textbf{Blockchain} & \textbf{tps} & \textbf{Blockchain} & \textbf{tps}\\ \hline
1) Bitcoin             & 7     & 3) Litecoin            & 56       \\ \hline
2) Ethereum            & 25    & 4) Bitcoin Cash        & 61           \\ \hline
\end{tabular}
\caption{The throughput in tps of the top-4 permissionless cryptocurrencies sorted by their market cap~\cite{throughput}.}
\label{table:throughput}
\end{center}
\end{table}

Table~\ref{table:throughput}
shows the transaction throughput of the top-4
permissionless cryptocurrencies sorted by their market cap.
An example AC$^2$T that exchange assets among Ethereum and Litecoin
blockchains and is witnessed by the Bitcoin network achieves
a throughput of 7. The witness network should be chosen from the
set of involved blockchains (Litecoin and Ethereum in this example)
to avoid limiting the transaction throughput.

\section{Conclusion}\label{sec:conclusion}

This paper presents \protocolwitness, the first decentralized \textbf{A}tomic 
\textbf{C}ross-\textbf{C}hain \textbf{C}ommitment protocol that ensures the 
all-or-nothing atomicity semantics even in the presence of participant
crash failures and network delays. Unlike in~\cite{atomicNolan, herlihy2018atomic}
where the protocol correctness mainly relies on participants rational behaviour, 
\protocolwitness separates the coordination of an Atomic Cross-Chain Transaction, \transactionname, from its execution. A permissionless open network of witnesses coordinates
the \transactionname while participants in the \transactionname execute
sub-transactions in the \transactionname. This separation allows \protocolwitness to 
ensure atomicity of all the sub-transactions in an \transactionname even in 
the presence of failures. In addition, this separation enables \protocolwitness to 
parallelly execute sub-transactions in the \transactionname reducing the latency
of an \transactionname from O($Diam(\mathcal{D})$) in~\cite{herlihy2018atomic}, where
$Diam(\mathcal{D})$ is the diameter of the \transactionname graph $\mathcal{D}$,
to $O(1)$ irrespective of the size of the \transactionname graph $\mathcal{D}$.
Also, this separation allows \protocolwitness to scale by using different permissionless
witness networks to coordinate different AC$^2$Ts. This ensures that using a 
permissionless network of witnesses for coordination does not introduce any
performance bottlenecks. Finally, the \protocolwitness protocol extends
the functionality of the protocol in~\cite{herlihy2018atomic} by supporting AC$^2$Ts
with complex graphs (e.g., cyclic and disconnected graphs). 
\protocolwitness introduces a slight monetary cost overhead to the
participants in the \transactionname. This cost equals to the cost of deploying
a coordination smart contract in the witness network plus the cost of a function
call to the coordination smart contract to decide whether to commit or to abort
the \transactionname. The smart contract deployment and function call 
approximately cost \$2\footnote{Assuming 
current ether to USD rate of \$140.} combined per \transactionname
when  the Ethereum network is used to coordinate this \transactionname.

\balance
\bibliographystyle{abbrv}
\bibliography{main}
\end{document}